%
\documentclass[runningheads]{llncs}
\usepackage{graphicx}

\usepackage{amsmath,amssymb}
\usepackage{xcolor}

\usepackage{booktabs}
\usepackage{caption}
\usepackage{hyperref}
\usepackage{enumerate}
\usepackage{enumitem}
\usepackage{lineno}
\usepackage[vlined]{algorithm2e}

\newcommand{\REMOVE}[1]{}

\newcommand{\eps}{\varepsilon}

\newcommand{\crn}{\text{cr}}

\renewcommand{\phi}{\varphi}

\newcommand{\emphh}[1]{\textbf{#1}}

\newcommand{\rephrase}[3]{\noindent\textbf{#1 #2}.~\emph{#3}}

\newtheorem{observation}{Observation}


%

\begin{document}
%
%
\title{Crossing Minimization in Perturbed Drawings\thanks{Research supported in part by the NSF awards CCF-1422311 and CCF-1423615,
and the
Austrian Science Fund (FWF): M2281-N35.}}
%
%
\author{Radoslav Fulek\inst{1}
\and
Csaba D. T\'oth\inst{2,3}
}
\authorrunning{R.~Fulek and Cs.D.~T\'oth}
%
\institute{{Institute of Science and Technology, Klosterneuburg, Austria}\\
\email{radoslav.fulek@gmail.com}
\and
California State University Northridge, Los Angeles, CA, USA
\and
Tufts University, Medford, MA, USA\\
\email{cdtoth@eecs.tufts.edu}}
\maketitle              
\begin{abstract}
Due to data compression or low resolution, nearby vertices and edges of a graph drawing may be bundled to a common node or arc.
We model such a ``compromised'' drawing by a piecewise linear map $\varphi:G\rightarrow \mathbb{R}^2$. We wish to perturb $\varphi$ by an arbitrarily small $\eps>0$ into a proper drawing (in which the vertices are distinct points, any two edges intersect in finitely many points, and no three edges have a common interior point) that minimizes the number of crossings. An $\eps$-perturbation, for every $\eps>0$, is given by a piecewise linear map $\psi_\eps:G\rightarrow \mathbb{R}^2$ with $\|\varphi-\psi_\eps\|<\eps$, where $\|.\|$ is the uniform norm (i.e., $\sup$ norm).

We present a polynomial-time solution for this optimization problem when $G$ is a cycle and the map $\varphi$ has no \emphh{spurs} (i.e., no two adjacent edges are mapped to overlapping arcs). We also show that the problem becomes NP-complete (i) when $G$ is an arbitrary graph and $\varphi$ has no spurs, and (ii) when $\varphi$ may have spurs and $G$ is a cycle or a union of disjoint paths.
\keywords{map approximation  \and c-planarity \and crossing number}
\end{abstract}

\section{Introduction}
\label{sec:intro}

A graph $G=(V,E)$ is a 1-dimensional simplicial complex. A continuous piecewise linear map $\varphi:G\rightarrow \mathbb{R}^2$
maps the vertices in $V$ into points in the plane, and the edges in $E$ to piecewise linear arcs between the corresponding vertices. However, several vertices may be mapped to the same point, and two edges may be mapped to overlapping arcs. This scenario arises in applications in cartography, clustering, and visualization, due to data compression, graph semantics, or low resolution.
Previous research focused on determining whether such a map $\varphi$ can be ``perturbed'' into an embedding. Specifically, a continuous piecewise linear map $\phi:G\rightarrow M$ is a \emphh{weak embedding} if, for every $\eps>0$, there is an embedding $\psi_\eps:G\rightarrow M$ with $\|\varphi-\psi_\eps\|<\eps$, where $\|.\|$ is the uniform norm (i.e., $\sup$ norm).
Recently, Fulek and Kyn\v{c}l~\cite{FK17+_ht} gave a polynomial-time algorithm for recognizing weak embeddings,
and the running time was subsequently improved to $O(n \log n)$ for simplicial maps by Akitaya et al.~\cite{AFT18}.
Note, however, that only planar graphs admit embeddings and weak embeddings. In this paper, we extend the concept of $\eps$-perturbations to nonplanar graphs, and seek a perturbation with the minimum number of crossings.

A continuous map $\varphi:G\rightarrow M$ of a graph $G$ to a 2-manifold $M$ is a \emphh{drawing} if (i) the vertices in $V$ are mapped to distinct points in $M$, (ii) each edge is mapped to a Jordan arc between two vertices without passing through any other vertex, and (iii) any two edges intersect in finitely many points.
A \emphh{crossing} between two edges, $e_1,e_2\in E$, is defined as an intersection point between the relative interiors of the arcs $\varphi(e_1)$ and $\varphi(e_2)$. For a piecewise linear map $\varphi:G\rightarrow \mathbb{R}^2$,
let $\crn(\varphi)$ be the minimum nonnegative integer $k$ such that for every $\eps>0$, there exists a drawing $\psi_\eps:G\rightarrow \mathbb{R}^2$ with $\|\varphi-\psi_\eps\|<\eps$ and $k$ crossings, see
Fig.~\ref{fig:simpleExample} for an illustration.

\begin{figure}
\centering
\includegraphics[width=0.8\textwidth]{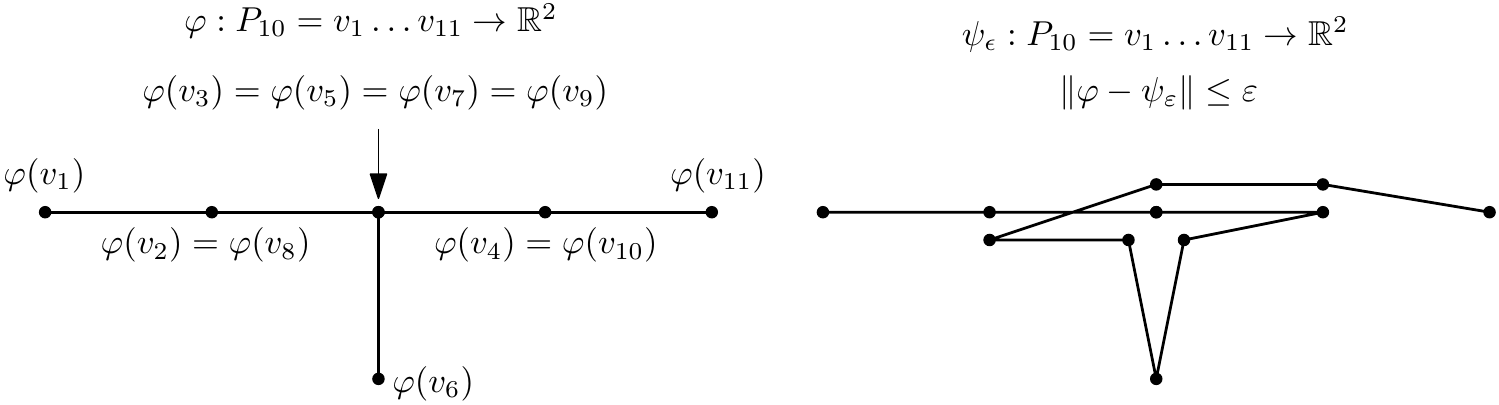}
\caption{An example for a map $\varphi:G\rightarrow \mathbb{R}^2$, where $G=P_{10}$, i.e., a path of length 10, with $\crn(\varphi)=1$ (left); and a perturbation $\psi_\eps$ witnessing that $\crn(\varphi)\le 1$ (right).}
\label{fig:simpleExample}
\end{figure}

It is clear that $\varphi$ is a weak embedding if and only if $\crn(\varphi)=0$. Note also that if $e_1,e_2\in E$ and the arcs  $\varphi(e_1)$ and $\varphi(e_2)$ cross transversely at some point $p\in \mathbb{R}^2$, then $\psi_\eps(e_1)$ and $\psi_\eps(e_2)$ also cross in the $\eps$-neighborhood of $p$ for any sufficiently small $\eps>0$. An $\eps$-perturbation may, however, remove tangencies and partial overlaps between edges.

The problem of determining $\crn(\varphi)$ for a given map $\varphi:G\rightarrow \mathbb{R}^2$ is NP-complete: In the special case that $\varphi(G)$ is a single point, $\crn(\varphi)$ equals the crossing number of $G$, and it is NP-complete to find the crossing number of a given graph~\cite{GJ82} (even if $G$ is a planar graph plus one edge~\cite{CM13}).

In this paper, we focus on the special case that $G$ is a cycle.
A series of recent papers~\cite{AAET17,ChEX15,CDPP09} show that weak embeddings can be recognized in $O(n\log n)$ time. Chang et al.~\cite{ChEX15} identified two features of a map $\varphi:G\rightarrow \mathbb{R}^2$ that are difficult to handle: A \emphh{spur} is a vertex whose incident edges are mapped to the same arc or overlapping arcs, and a \emphh{fork} is a vertex mapped to the relative interior of the image of some nonincident edge (a vertex may be both a fork and a spur). We prove the following results.

\begin{theorem}\label{thm:cycle}
Given a cycle $G=(V,E)$ and a piecewise linear map $\varphi:G\rightarrow \mathbb{R}^2$, where $G$ has $n$ vertices and the image $\varphi(G)$ is a plane graph with $m$ vertices, then $\crn(\varphi)$ can be computed
\begin{enumerate}
\item in $O((m+n)\log (m+n))$ time if $\varphi$ has neither spurs nor forks,
\item in $O((mn)\log (mn))$ time if $\varphi$ has no spurs.
\end{enumerate}
\end{theorem}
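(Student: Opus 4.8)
The plan is to translate the geometric problem into a combinatorial optimization and then exploit the fact that $G$ is a single cycle.

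\textbf{From $\varphi$ to a combinatorial model.} Let $H=\varphi(G)$, a plane graph with $O(m)$ vertices and edges, and fix a small regular neighbourhood $N(H)$: a disk $D_u$ around each vertex $u$ of $H$ and a thin strip $R_e$ around each edge $e$, so that $N(H)$ is a planar surface with one boundary circle per face of $H$. Since $\varphi$ has no forks, each vertex of $G$ maps to a vertex of $H$; after subdividing the edges of $G$ so that each maps across a single edge of $H$, the cycle $G$ becomes a closed walk $W=e_1e_2\cdots e_L$ in $H$ with $L=O(m+n)$. For all sufficiently small $\varepsilon$, an $\varepsilon$-perturbation $\psi_\varepsilon$ lies in $N(H)$ and is described combinatorially by: (i) for each edge $e$ of $H$, traversed $k_e$ times by $W$, a linear order of the $k_e$ monotone strands crossing $R_e$ (the orders induced on the two ends of $R_e$ differ by the number of crossings placed inside $R_e$); and (ii) for each vertex $u$, a system of chords in $D_u$ joining, for every index $i$ with $e_i$ and $e_{i+1}$ both incident to $u$, the exit point of strand $i$ to the entry point of strand $i+1$, where the cyclic order of these points on $\partial D_u$ is dictated by the rotation of $H$ at $u$ together with the chosen strand orders of the edges at $u$. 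The number of crossings of $\psi_\varepsilon$ is the number of crossings inside strips plus, summed over all $u$, the number of interleaving chord pairs in $D_u$; conversely any such combinatorial datum is realized by some $\psi_\varepsilon$. Hence $\crn(\varphi)$ is the minimum of this quantity over all strand orders, and the no-spurs hypothesis is used precisely to ensure $e_i\neq e_{i+1}$, so that every chord in a disk joins two \emph{distinct} strip-blocks.

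\textbf{A near-linear algorithm for the single cycle (statement (1)).} Because $G$ is a cycle, $W$ is one closed walk, and I would first argue that the minimum is attained with no crossings inside strips, so the objective reduces to $\sum_u(\text{interleaving chord pairs in }D_u)$, a sum of pairwise terms, each depending on the relative order within at most two strip-blocks. The structural core is then to show that, for a single closed walk in a plane graph, an optimal choice of strand orders can be produced by one traversal of $W$ — intuitively, the strands of each strip should be ordered consistently with the cyclic positions of their continuations in the adjacent disks — and that, given these orders, the total interleaving count is computed by a left-to-right plane sweep maintained in a balanced search tree, all in $O(L\log L)=O((m+n)\log(m+n))$ time. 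As a sanity check, the case $\crn(\varphi)=0$ must reproduce the known characterisation of weak embeddings of a cycle.

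\textbf{Incorporating forks (statement (2)).} If $\varphi$ has forks, a vertex of $G$ may lie in the relative interior of some edge $f$ of $H$; then $f$ is effectively subdivided at that point and the perturbation must decide on which side the incident strand passes, which amounts to a binary choice at each of the $\le n$ fork vertices. I would fold these choices into the optimization above by a dynamic program that traverses $W$ once: the state records only the local ``side'' of the current strand relative to the fork it last met, while the per-step cost (an interleaving count in the disk just entered) is again maintained with a balanced search tree. In effect the single sweep of statement (1) is replaced by $O(n)$ coupled sweeps over a structure of total size $O(mn)$, giving $O((mn)\log(mn))$ time.

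\textbf{Where the difficulty lies.} I expect two points to require the most care. The first is the lower bound in the reduction: proving that no $\varepsilon$-perturbation can beat the combinatorial optimum, i.e.\ that the crossings we count are genuinely unavoidable; this must exclude global reroutings that trade crossings in strips for crossings in disks, and it is exactly here that the absence of spurs is essential. The second is the consistency claim for a single closed walk — that the order constraints arising at the various vertices of $H$ never conflict in a way that forces extra crossings, so that a local choice along $W$ is globally optimal. Forks make the second point genuinely harder, since the side-choices along $G$ interact around the cycle, which is the reason a dynamic program, and the extra factor $n$, enter in statement (2).
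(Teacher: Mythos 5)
Your combinatorial model (disks, strips, strand orders, interleaving chords) matches the paper's reformulation via a simplicial map $\lambda:G\rightarrow H$, a thickening $\mathcal{H}$, and the split $\crn=\crn_1+\crn_2$. But the algorithmic core of your proposal has a genuine gap. You assert that "an optimal choice of strand orders can be produced by one traversal of $W$," ordering the strands of each strip "consistently with the cyclic positions of their continuations in the adjacent disks." This is circular: a strand's continuation lies in another strip whose order is itself being decided, and the order in one strip affects chord interleavings in both of its end-disks simultaneously. Proving that local choices are globally optimal is exactly the content of the theorem, and you flag it yourself as the difficulty without supplying an argument. The paper does not attempt a one-pass greedy; instead it proves that $\crn(\gamma\circ\lambda)$ is \emph{invariant} under two local simplification operations (\textsf{ClusterExpansion} and \textsf{PipeExpansion}), shows via a potential function $|E(G)|-|E(H)|$ that repeated application terminates with $H$ a cycle, and only then reads off the answer. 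Crucially, even after all orders are chosen optimally, there is a topologically forced residue of self-crossings: when $G=C_n$ winds $n/k$ times around $H=C_k$, the minimum is $\crn_1(\lambda)=n/k-1$ (Hass--Scott). Your proposal has no mechanism that produces this nonzero residue; a "consistent" ordering found by one sweep would have to somehow know it cannot reach zero, and nothing in your sketch certifies that.

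Two smaller points. First, your treatment of forks for statement (2) as a dynamic program over binary "side choices" is not how the extra factor arises: forks are absorbed by subdividing so that $\lambda$ becomes simplicial, which can blow the instance up to size $O(mn)$; the same algorithm then runs on the larger instance. There is no separate side-choice state to propagate. Second, your reduction of the geometric problem to the combinatorial optimum (your "lower bound" worry) is handled in the paper by properties (P1)--(P3) normalizing any perturbation into the thickening, which is routine; the hard direction is not there but in the invariance proofs for the two expansion operations (the paper's Lemmas on \textsf{ClusterExpansion} and \textsf{PipeExpansion}), where the crossings in a disk $D_u$ are split into "interleaving" and "nested" types and the interleaving ones are converted into pipe crossings of the new instance. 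Without something playing the role of those invariance lemmas and the final winding-number formula, your argument does not close.
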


As noted above, the problem of determining $\crn(\varphi)$ is NP-complete when $G$ is an arbitrary graph (even if $\varphi$ is a constant map).
We show that the problem remains NP-complete if $G$ is a cycle and we drop the condition that $\varphi$ has no spurs.

\begin{theorem}\label{thm:hardness}
Given $k\in \mathbb{N}$ and a piecewise linear map $\varphi:G\rightarrow \mathbb{R}^2$,
it is NP-complete to decide whether $\crn(\varphi)\leq k$ if  $\varphi:G\rightarrow \mathbb{R}^2$ may have spurs and
\begin{enumerate}
\item $G$ is a cycle, or
\item $G$ is a union of disjoint paths.
\end{enumerate}
\end{theorem}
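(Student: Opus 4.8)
The plan is to place the problem in NP via a compact combinatorial description of an optimal perturbation, and to prove NP-hardness by a gadget reduction in which the local resolutions of spurs --- essentially the only freedom an $\eps$-perturbation has over the fixed map $\varphi$ --- are forced to encode a solution of a classical NP-hard problem such as \textsc{Max-Cut}.

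For membership in NP, I would first argue that there is an optimal $\eps$-perturbation $\psi_\eps$ with a canonical local form. Since $\|\varphi-\psi_\eps\|<\eps$, the image $\psi_\eps(G)$ lies in an $\eps$-neighbourhood of the plane graph $B:=\varphi(G)$, so each edge of $G$ stays inside the thin ``corridor'' around its image arc; a routine uncrossing/exchange argument lets us assume every strand is monotone along each corridor and behaves like a straight chord inside the small disk around each vertex of $B$. The certificate is then the purely combinatorial ``wiring data'': for each edge $f$ of $B$ the left-to-right order of the at most $n$ strands crossing a transversal of the corridor of $f$, and for each vertex $w$ of $B$ the matching between strand-ends on $\partial w$. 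This is $O(mn)$ data, it can be checked in polynomial time for local realizability, and it determines the number of crossings, computable in polynomial time (two strands cross inside a corridor iff their orders at its two ends differ, and inside a vertex-disk the minimum number of crossings for a prescribed cyclic port assignment and matching is an inversion count).

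For NP-hardness I would reduce from \textsc{Max-Cut}. Fix a simple, efficiently describable plane graph $B$ (essentially a bounded number of collinear segments together with a few ``routing'' arcs) into which the components of $G$ are mapped. Each vertex of the \textsc{Max-Cut} instance is represented by a \emph{variable gadget}: a subpath of $G$ carrying a spur whose opening has exactly two combinatorially distinct resolutions (the hairpin lies on one side or the other of a spine segment), with auxiliary spurs copying this bit so that the gadget is rigidly two-valued in any near-optimal drawing. Each edge of the instance is represented by an \emph{edge gadget}: two strands running alongside a common segment of $B$, arranged so that they are forced to cross exactly once precisely when the two incident variable gadgets are resolved the same way --- the forcing being a Jordan-curve argument, as those two strands together with an arc of $B$ bound a disc that a designated companion strand must enter and leave. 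If the gadgets are laid out so that distinct gadgets create no incidental crossings, one obtains $\crn(\varphi)=c_0+(|E|-\operatorname{maxcut})$ for a baseline $c_0$ computable in polynomial time, hence $\crn(\varphi)\le c_0+|E|-W$ iff the instance admits a cut of size at least $W$; the completeness direction --- turning a cut into a perturbation realizing the claimed number of crossings --- is a direct construction. For part (1) I would then splice the paths of this instance into a single cycle using connector edges routed through a fresh ``clean'' region appended to $B$, so that the connectors contribute a fixed, precomputable number $c_1$ of unavoidable crossings and perturb none of the gadgets; then $\crn(\varphi')\le c_0+c_1+|E|-W$ iff the instance has a cut of size at least $W$.

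The main obstacle is the soundness direction, i.e.\ the crossing \emph{lower bounds}. One must show that every $\eps$-perturbation with few crossings is ``honest'': its spurs are resolved consistently within each variable gadget (which one establishes by an exchange argument uncrossing any inconsistency without increasing the crossing count), and every monochromatic edge of the instance truly costs a crossing (a global $\bmod 2$/winding argument relative to $B$). Since an adversarial perturbation might try to economise by resolving the degeneracies in an unanticipated global pattern, the delicate point is to make the gadgets rigid enough to preclude this while keeping $B$ --- and hence the baseline $c_0$ and the whole reduction --- of polynomial size.
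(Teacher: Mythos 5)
Your overall architecture matches the paper's: membership in NP via a polynomial-size combinatorial certificate (strand orders in each pipe/corridor plus matchings in each cluster disk, with crossings counted by interleaving/inversions), hardness via gadgets whose only real freedom is how spurs are resolved, and a final splicing of the disjoint paths into a single cycle through a clean region whose extra crossings are precomputable. The NP-membership and cycle-splicing parts are essentially the paper's Lemma on NP membership and its ``Second Construction,'' and those parts of your proposal are sound.

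The genuine gap is in the hardness reduction, and you have in fact named it yourself: the soundness direction. You assert that a variable gadget ``with auxiliary spurs copying this bit'' is ``rigidly two-valued in any near-optimal drawing,'' and that an edge gadget's two strands ``are forced to cross exactly once precisely when the two incident variable gadgets are resolved the same way,'' but you supply neither the exchange argument that forces all spurs of a gadget to open consistently nor the lower bound showing a monochromatic edge necessarily costs a crossing against every adversarial perturbation. This is not a routine detail: it is the entire content of the hardness proof. The paper discharges it by reducing from 3SAT with very concrete gadgets --- each variable path traverses its spine three times ($P^x_1,P^x_2,P^x_3$), the clause literals are encoded by spur-laden detours $A'$ and $B'$ contributing exactly $7$ strands per variable to the middle pipe, and a dedicated counting lemma (Lemma~\ref{lem:13}) performs a case analysis on the relative position of the clause edge $e_i$ within the intervals $I_x,I_y,I_z$ to prove the sharp bound of $13$ crossings per clause neighborhood, with equality forcing an embedded variable gadget and a true literal. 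Your Max-Cut route would need an analogous quantitative lemma (a per-edge-gadget lower bound that is tight exactly in the ``cut'' configuration, plus a proof that distinct gadgets cannot trade crossings against each other), and there is a further structural worry you do not address: a Max-Cut vertex has unbounded degree, so its bit must be read by arbitrarily many edge gadgets along a single subpath of $G$, and maintaining consistency of all these readings in every near-optimal perturbation is precisely the kind of global rigidity that must be proven, not assumed. As written, the proposal is a plausible plan whose critical step is acknowledged but missing.
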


\medskip\noindent\textbf{Related previous work.}
Finding efficient algorithms for the recognition of weak embeddings
$\varphi:G\rightarrow M$,
where $G$ is an arbitrary graph, was posed as an open problem in~\cite{AAET17,ChEX15,CDPP09}. The first polynomial-time solution for the general version follows from a recent variant~\cite{FK17+_ht} of the Hanani-Tutte theorem~\cite{Ha34_uber,Tutte70_toward}, which was conjectured by M.~Skopenkov~\cite{Sko03_approximability} in 2003 and in a slightly weaker form already by Repov\v{s} and A.~Skopenkov~\cite{ReSk98_deleted} in 1998. Weak embeddings of graphs also generalize various graph visualization models such as \emphh{strip planarity}~\cite{ADDF17} and \emphh{level planarity}~\cite{JLM98_level}; and can be seen as a special case~\cite{AL16_pipes} of the notoriously difficult \emphh{cluster-planarity} (for short, \emphh{c-planarity})~\cite{FCE95a_how,FCE95b_planarity}, whose tractability remains elusive today.

\medskip\noindent\textbf{Organization.} We start in Sec.~\ref{sec:prelim} with preliminary observations that show that determining $\crn(\varphi)$ is a purely combinatorial problem, which can be formulated without metric inequalities.
We describe and analyse a recognition algorithm, proving Theorem~\ref{thm:cycle} in Sec.~\ref{sec:cycles}. We prove NP-hardness by a reduction from 3SAT in Sec.~\ref{sec:hardness}, and conclude in Sec.~\ref{sec:con}.
Omitted proofs are available in the Appendix.

\section{Preliminaries}
\label{sec:prelim}

We rely on techniques introduced in~\cite{AAET17,ChEX15,CDibPP05_cycles,FK17+_ht}, and complement them with additional tools to keep track of edge crossings. A piecewise linear function $\varphi:G\rightarrow \mathbb{R}^2$ is a composition $\varphi=\gamma\circ \lambda$, where $\lambda:G\rightarrow H$ is a continuous map from $G$ to a graph $H$ (i.e., a 1-dimensional simplicial complex) and $\gamma:H\rightarrow \mathbb{R}^2$ is a drawing of $H$.
We may further assume, by subdividing the edges of $G$ if necessary, that the map $\lambda:G\rightarrow H$ is \emphh{simplicial}, that is, it maps vertices to vertices and edges to edges; and $\gamma:H\rightarrow \mathbb{R}^2$ is a straight-line drawing of $H$, where each edge in $E(H)$ is mapped to a line segment. To distinguish the graphs $G$ and $H$ in our terminology, $G$ has \emphh{vertices} $V(G)$ and \emphh{edges} $E(G)$, and $H$ has \emphh{clusters} $V(H)$ and \emphh{pipes} $E(H)$.

A perturbation $\psi_\eps$ of $\varphi$ lies in the $\eps$-neighborhood of $\varphi(G)$. We define suitable neighborhoods for the graph $H$, and the image $\gamma(H)=\varphi(G)$. For the graph $H$ and its drawing $\gamma:H\rightarrow \mathbb{R}^2$, we define the  \emphh{neighborhood} $\mathcal{N}\subset \mathbb{R}^2$ as the union of regions $N_u$ and $N_{uv}$ for every $u\in V(H)$ and $uv\in E(G)$, respectively, as follows. Let $\eps_0>0$ be a sufficiently small constant specified below. For every $u\in V(H)$, let $N_u$ be the closed disk of radius $\eps_0$ centered at $\gamma(u)$.
For every edge $uv\in E(H)$, let $N_{uv}$ be the set of points at distance at most $\eps_0^2$ from $\gamma(uv)$ that lie in the interior of neither $N_u$ nor $N_v$. Let $\eps_0>0$ be so small that for every triple $\{u,v,w\}\subset V(H)$, the disk $N_u$ is disjoint from both $N_v$ and $N_{vw}$, and the regions $N_{uv}$ and $N_{uw}$ are disjoint from each other. (Note, however, that regions ${N}_{uv}$ and ${N}_{u'v'}$ may intersect if the line segments $\gamma(uv)$ and $\gamma(u'v')$ cross.)

Such $\eps_0>0$ exists due to piecewise linearity of $\varphi$ and by compactness. (Indeed, consider the intersection $B_{u,v}$ and $B_{u,w}$ of the boundary of $N_u$ with that of $N_{uv}$ and $N_{uw}$, respectively. Taking $\eps_0$ sufficiently small, we assume that $N_u \cap \gamma(uv)$ and $N_u \cap \gamma(uw)$ are line segments meeting in $u$ at some angle $\alpha\le \pi$. We require $\eps_0<\frac{1}{\pi}\alpha$ since we need $\eps_0^2<\frac{1}{\pi}\eps_0\alpha$ for $B_{u,v}$ and $B_{u,w}$ to be disjoint, and hence $N_{uv}$ and $N_{uw}$.)
By definition, an $\eps$-perturbation of $\varphi=\gamma\circ \lambda$ lies in the neighborhood $\mathcal{N}$ for all $\eps\in (0,\eps_0^2)$.

For the graph $H$ and its drawing $\gamma:H\rightarrow \mathbb{R}^2$, we also define the \emphh{thickening} $\mathcal{H}$, $H\subset \mathcal{H}$,  as a 2-dimensional manifold with boundary as follows. For every $u\in V(H)$, create a topological disk $D_u$, and for every edge $uv\in E(H)$, create a rectangle $R_{uv}$. For every $D_u$ and $R_{uv}$, fix an arbitrary orientation of $\partial D_u$ and $\partial R_{uv}$, respectively. Partition the boundary of $\partial D_u$ into $\deg(u)$ arcs, and label them by $A_{u,v}$, for all $uv\in E(H)$, in the cyclic order around $\partial D_u$ determined by the rotation of $u$ in the the drawing $\gamma(G)$. The manifold $\mathcal{H}$ is obtained by identifying two opposite sides of every rectangle $R_{uv}$ with $A_{u,v}$ and $A_{v,u}$ via an orientation preserving homeomorphism. Note that there is a natural map $\Gamma:\mathcal{H}\rightarrow \mathcal{N}$ such that $\Gamma|_H=\gamma$; $\Gamma$ is a homeomorphism between $D_u$ and $N_u$ for every $u\in V(H)$; and $\Gamma$ maps $R_{uv}$ to $N_{uv}$ for every $uv\in E(H)$.

We reformulate a problem instance $\varphi:G\rightarrow \mathbb{R}^2$ as two functions
$\lambda:G\rightarrow H$ and $\gamma:H\rightarrow \mathbb{R}^2$,
where $G$ and $H$ are abstract graphs,
$\lambda$ is a simplicial map and $\gamma$ is a straight-line drawing of $H$.
A \emphh{perturbation} of the map $\varphi=\gamma\circ \lambda$ is a drawing $\psi=\Gamma\circ \Lambda$, where $\Lambda:G\rightarrow \mathcal{H}$ is a drawing of $G$ on $\mathcal{H}$ with the following properties:
\begin{itemize}
\item[{\rm (P1)}] for every vertex $a\in V(G)$, $\Lambda(a)\in D_{\lambda(a)}$,
\item[{\rm (P2)}] for every edge $ab\in E(G)$, $\Lambda(ab)\subset D_{\lambda(a)}\cup R_{\lambda(a)\lambda(b)}\cup D_{\lambda(b)}$
    such that it crosses the boundary of the disks $D_{\lambda(a)}$ and $D_{\lambda(b)}$ precisely once, and
\item[{\rm (P3)}] all crossing between arcs $\Lambda(e)$, $e\in E(G)$, lie in the disks $D_u$, $u\in V(H)$;
\end{itemize}
and $\Gamma:\mathcal{H}\rightarrow \mathbb{R}^2$ maps the disk $D_u$ injectively into $\mathcal{N}_u$ for all $u\in V(H)$,
and rectangle $R_{uv}$ into ${N}_{uv}$ for all $uv\in E(H)$ (however the rectangles $R_{uv}$ and $R_{u'v'}$ may be mapped to crossing neighborhoods ${N}_{uv}$ and ${N}_{u'v'}$ for two independent edges $uv,u'v'\in E(H)$).

\medskip\noindent\textbf{Combinatorial Representation.}
Properties (P1)--(P3) allow for a combinatorial representation of the drawing $\Lambda:G\rightarrow \mathcal{H}$:
For every pipe $uv\in E(H)$, let $\pi_{uv}$ be a total order of the edges in $\lambda^{-1}[uv]\subseteq E(G)$
in $R_{\lambda(a)\lambda(b)}$; and let $\pi_\Lambda=\{\pi_{uv}:uv\in E(H)\}$ the collection of these total orders.
In fact, we can assume that $\Lambda(G)$ consists of straight-line segments in every rectangle $R_{uv}$,
and every disk $D_u$. The number of crossings in each disk $D_u$ is determined by the cyclic order of the
segment endpoints along $\partial D_u$. Thus the number of crossings in all disk $D_u$, $u\in V(H)$ is
determined by $\pi_\Lambda$.

\medskip\noindent\textbf{Two Types of Crossings.}
The reformulation of the problem allows us to distinguish two types of crossings in a piecewise-linear map
$\varphi:G\rightarrow \mathbb{R}^2$: edge-crossings in the neighborhoods $N_u$, $u\in V(H)$,
and crossings between edges mapped to two pipes that cross each other.

The number of crossings between the edges of $G$ inside a disk $N_u$, $u\in V(H)$,
is the same as the number of crossings in $D_u$, since $\Gamma$ is injective on $D_u$.
We denote the total number of such crossings by

$$\crn_1(\lambda)=\min_{\Lambda} \left(\sum_{u\in V(H)} \text{CR}_\Lambda(u)\right),$$
where $\text{CR}_\Lambda(u)$ is the number of crossings of the drawing $\Lambda(G)$ in the disk $D_u$.

Let the \emphh{weight} of a pipe $e\in E(H)$ be the number of edges of $G$ mapped to $e$, that is, $w(e):=|\lambda^{-1}[e]|$. If the arcs $\gamma(e_1)$ and $\gamma(e_2)$ cross in the plane, for some $e_1,e_2\in E(H)$, then every edge in $\lambda^{-1}[e_1]$ crosses all edges in $\lambda^{-1}[e_2]$. The total number of crossings between the edges of $G$ attributed to the crossings between pipes is

$$\crn_2(\gamma,\lambda)=\sum_{\{e_1,e_2\}\in C} w(e_1) w(e_2),$$
where $C$ is the multiset of pipe pairs $\{e_1,e_2\}$ such that $\gamma(e_1)$ and $\gamma(e_2)$ cross.
It is now clear that

\begin{equation}\label{eq:cr1-2}
\crn(\gamma\circ\lambda)=\crn_1(\lambda)+\crn_2(\gamma,\lambda).
\end{equation}

The operations in Section~\ref{sec:cycles} successively modify an instance $\varphi=\gamma\circ\lambda$
until $H$ becomes a cycle. In this case, it is easy to determine $\crn_2(\gamma,\lambda)$, which is a consequence of the following  folklore lemma.
\begin{lemma}\label{lem:cycle}\cite[Lemma~1.12]{Hass85_intersections}
If $G=C_n$ and $H=C_k$ and $\lambda:G\rightarrow H$ is a simplicial map without spurs, where the cycle $G$ winds around the cycle $H$ precisely $n/k$ times,
then $\crn_1(\lambda)=\frac{n}{k}-1$.
\end{lemma}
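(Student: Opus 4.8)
The plan is to write $d := n/k$ for the winding number (a positive integer, since $\lambda$ is simplicial and has no spurs), to observe that the statement is the combinatorial incarnation of the classical fact that a closed curve winding $d$ times around an annulus has minimum self-intersection number $d-1$, and to prove the bounds $\crn_1(\lambda)\ge d-1$ and $\crn_1(\lambda)\le d-1$ separately inside the combinatorial representation set up above. Two preliminary observations make the picture concrete. First, because $H=C_k$ and $\lambda$ has no spurs, at every vertex $a\in V(G)$ the two incident edges are sent to the two distinct pipes at $\lambda(a)$, so $\lambda(G)$ is a non-backtracking closed walk in $C_k$ and hence winds monotonically; consequently every pipe $e\in E(H)$ carries exactly $w(e)=d$ edges of $G$, and every chord of a cluster-disk joins the two pipes incident to that cluster. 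Second, the thickening $\mathcal{H}$ of $C_k$ is an annulus, and by (P3) every crossing of a perturbation $\Lambda$ lies in some disk $D_u$, so $\sum_{u\in V(H)}\text{CR}_\Lambda(u)$ equals the number of crossings of $\Lambda(G)$ inside $\mathcal{H}$.

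For the lower bound, fix an arbitrary perturbation $\Lambda$, choose a pipe $e_1\in E(H)$, and take a cross-section arc $\tau\subset R_{e_1}$ meeting each of the $d$ strands of $\Lambda(G)$ exactly once and passing through no crossing (possible because the strands are parallel inside each rectangle). Cutting $\mathcal{H}$ along $\tau$ produces a disk $\mathcal{H}'$ whose boundary contains two copies $\tau_1,\tau_2$ of $\tau$, each carrying $d$ marked points; this cut neither creates nor destroys a crossing, so $\Lambda(G)$ becomes a system of $d$ arcs in $\mathcal{H}'$ with exactly $\sum_u\text{CR}_\Lambda(u)$ mutual crossings. Following $G$ once around the annulus defines a permutation $\rho$ of the $d$ strands --- the arc leaving the $j$-th point of $\tau_1$ arrives at the $\rho(j)$-th point of $\tau_2$ --- and $\rho$ is a single $d$-cycle since $G$ is connected. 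This arc system is a permutation diagram in a disk, so by the Jordan curve theorem any two of its arcs whose endpoints interleave along $\partial\mathcal{H}'$ must cross; a short count shows that the number of such forced crossings equals the number of inversions of $\rho$ with respect to the order that $\Lambda$ induces on $\tau$, which is at least the minimum number of transpositions expressing $\rho$, namely $d-1$. Since this last quantity depends only on the cycle type of $\rho$, it is independent of the chosen order, and we conclude $\sum_u\text{CR}_\Lambda(u)\ge d-1$.

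For the upper bound I will construct a perturbation with exactly $d-1$ crossings. Draw $G$ as $d$ pairwise disjoint nested arcs inside every rectangle $R_e$ and inside every cluster-disk $D_u$ except one distinguished disk $D_{u_0}$. The only remaining freedom is the choice of total order of the $d$ strands along one pipe; having fixed it, requiring every connection across a disk $D_u$ with $u\ne u_0$ to be order-preserving determines all other orders, and the connection pattern that the strands are then forced to realize inside $D_{u_0}$ is exactly the monodromy $d$-cycle of the previous paragraph. Choosing that one order so that this $d$-cycle becomes the cyclic shift $j\mapsto j+1 \pmod d$, it can be drawn inside $D_{u_0}$ with exactly $d-1$ crossings, while every other disk and every rectangle stays crossing-free; hence the drawing has $d-1$ crossings in total. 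It satisfies (P1)--(P3) and is consistent with $\lambda$, the only structural input needed being that, in the absence of spurs, the chords inside each disk form a perfect matching between the two incident pipes.

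The step I expect to be the main obstacle is the lower bound: making precise that cutting the annulus along $\tau$ turns $\sum_u\text{CR}_\Lambda(u)$ into the crossing number of a permutation diagram, together with the purely combinatorial fact that this crossing number is at least $d-1$ for every $d$-cycle, no matter how the boundary points are ordered. A secondary and much lighter point is the assumption that each strand meets $\tau$ exactly once and that rectangles carry no crossings --- but this is precisely what the combinatorial representation with the orders $\pi_\Lambda$ provides, so it needs only a brief remark rather than genuine work.
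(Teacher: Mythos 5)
The paper offers no proof of this lemma: it is stated as folklore and delegated to Hass--Scott \cite[Lemma~1.12]{Hass85_intersections}, and it does not appear in the appendix of omitted proofs. So there is nothing to compare against line by line; judged on its own, your argument is correct and is essentially the standard self-contained proof of this classical fact. The lower bound is sound: absence of spurs on $H=C_k$ forces the walk $\lambda(G)$ to be non-backtracking, hence monotone, so each pipe carries exactly $d=n/k$ strands; cutting the annulus $\mathcal{H}$ along a crossing-free transversal in one rectangle (legitimate by (P2)--(P3)) yields a permutation diagram whose interleaving pairs are exactly the inversions of the monodromy $\rho$, and $\mathrm{inv}(\rho)\ge d-c(\rho)=d-1$ since $G$ connected makes $\rho$ a single $d$-cycle, with conjugation-invariance taking care of the arbitrary labelling. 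The upper bound via a drawing that is nested everywhere except one disk realizing the cyclic shift with $d-1$ crossings is also fine. The only points worth tightening in a written version are the orientation bookkeeping when you identify ``interleaving on $\partial\mathcal{H}'$'' with ``inversion of $\rho$'' (the second copy of the cut appears with reversed order on the disk boundary), and the remark that ``order-preserving across a disk'' means the nested (non-crossing) matching between the two boundary arcs $A_{u,v}$ and $A_{u,w}$; neither affects correctness.
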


\section{Cycles without Spurs}
\label{sec:cycles}

Let $G=C_n$ be a cycle with $n$ vertices, and $H$ an arbitrary abstract graph, $\lambda:G\rightarrow H$ a simplicial map that does not map any two consecutive edges of $G$ to the same edge in $H$, and $\gamma:H\rightarrow \mathbb{R}^2$ a straight-line drawing. In this section, we prove that $\crn(\gamma\circ\lambda)$ is invariant under the so-called \textsf{ClusterExpansion} and \textsf{PipeExpansion} operations.
(Similar operations for weak embeddings have been introduced in~\cite{AAET17,ChEX15,CDibPP05_cycles,FK17+_ht}.)
We show that a sequence of $O(n)$ operations produces an instance in which $H$ is a cycle, where we can easily determine both $\crn_1(\lambda)$ and $\crn_2(\gamma,\lambda)$, hence $\crn(\gamma\circ\lambda)$. \bigskip

\begin{figure}[htbp]
	\centering
	\includegraphics[width=0.8\textwidth]{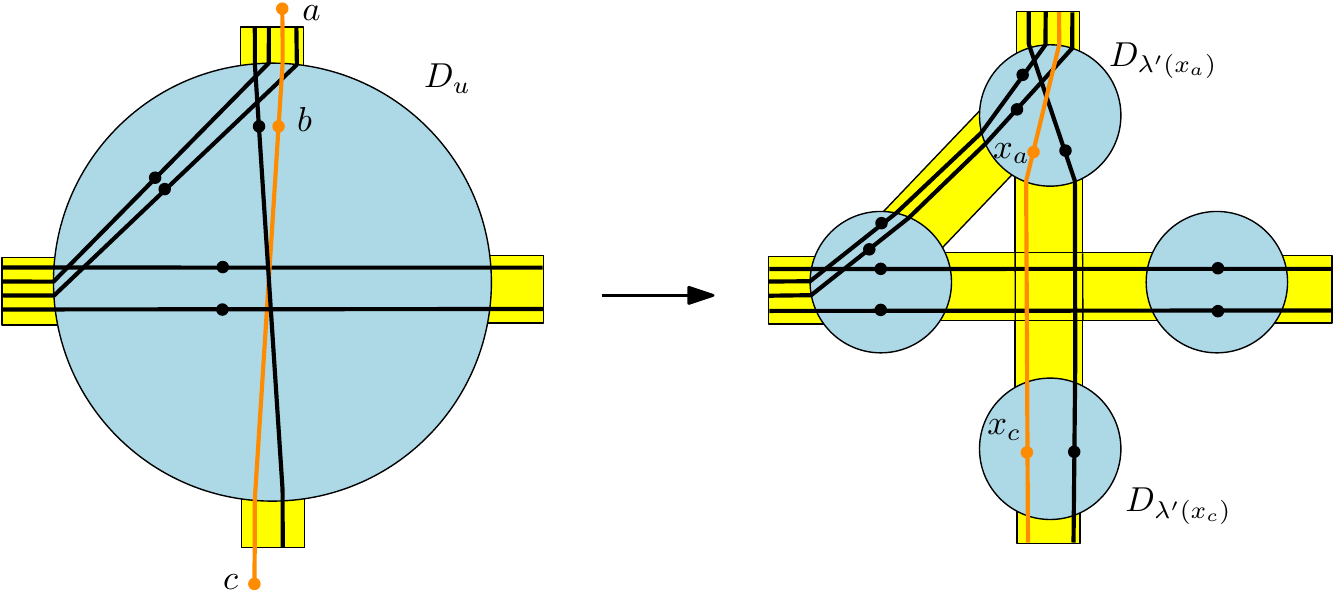}
	\caption{\textsf{ClusterExpansion}$(u)$.}
	\label{fig:expansion}
\end{figure}

\begin{quote}\textsf{ClusterExpansion}$(u)$.
See Figure~\ref{fig:expansion} for an illustration.
(1) Let $D_u$ be a sufficiently small disk centered at $\gamma(u)$ that intersects only the images of pipes incident to $u$.
(2) Subdivide every pipe $uv\in E(H)$ incident to $u$ with a new cluster $y_v$, let $\gamma(y_v):=\partial D_u\cap \gamma(uv)$.
(3) Subdivide every edge $ab\in E(G)$ such that $\lambda(b)=u$ with a new vertex $x_a$ such that $\lambda(x_a)=y_{\lambda(a)}$.
(4) For every vertex $b\in \lambda^{-1}[u]$, and any two neighbors $x_a$ and $x_c$, insert an edge $x_ax_c$ in $G$, insert a pipe $\lambda(x_a)\lambda(x_c)$ in $H$ if it is not already present, and draw this pipe in the plane as a straight-line segment between $\gamma(\lambda(x_a))$ and $\gamma(\lambda(x_c))$.
(5) Delete cluster $u$ from $H$, and delete all vertices in $\lambda^{-1}[u]$ from $G$.
(6) Return the resulting instance by $\lambda':G'\rightarrow H'$ and $\gamma':H'\rightarrow \mathbb{R}^2$.
\end{quote}

\begin{lemma}\label{lem:cluster-exp}
If $G$ is a cycle, $\lambda:G\rightarrow H$ has no spur, and $u\in V(H)$,
then \textsf{ClusterExpansion}$(u)$ produces an instance where $G'$ is a cycle, $\lambda':G'\rightarrow H'$ has no spur,
and $\crn(\gamma\circ\lambda) = \crn(\gamma'\circ\lambda')$.
\end{lemma}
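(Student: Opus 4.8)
The plan is to establish the three assertions in turn; only the equality of crossing numbers requires work. Since $\lambda$ is simplicial it sends every edge of $G$ to an edge (not a vertex) of $H$, so no two consecutive vertices of $G$ lie in $\lambda^{-1}[u]$; i.e.\ $\lambda^{-1}[u]$ is independent in the cycle $G$. For each $b\in\lambda^{-1}[u]$ with cycle-neighbours $a,c$, steps~(2)--(5) replace the sub-path $a$--$x_a$--$b$--$x_c$--$c$ by $a$--$x_a$--$x_c$--$c$; doing this at every vertex of $\lambda^{-1}[u]$ keeps $G'$ a single cycle. For the absence of spurs it suffices to inspect the new vertices $x_a$ and the old vertices adjacent to $\lambda^{-1}[u]$: at $x_a$ the two incident edges map to the distinct pipes $\lambda(a)y_{\lambda(a)}$ and $y_{\lambda(a)}y_{\lambda(c)}$, whose straight images meet only at $\gamma(y_{\lambda(a)})$; at an old vertex $a$ the no-spur hypothesis at $a$ forces $a$ to have at most one neighbour in $\lambda^{-1}[u]$ and $\lambda(ab)\ne\lambda(aa')$ for its other neighbour $a'$, so the two edges at $a$ still map to distinct sub-pipes. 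All remaining vertices are untouched.

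\medskip\noindent\textbf{Reducing the crossing count to a local identity.} I work in the combinatorial reformulation and use $\crn=\crn_1+\crn_2$ from \eqref{eq:cr1-2}. Because $G$ and $G'$ are cycles without spurs, inside each disk $D_w$ the drawing is a chord system (each arc joins two points lying in two distinct boundary arcs $A_{w,\cdot}$); hence, for a fixed family of boundary orders $\pi$, the minimum number of crossings in $D_w$ equals the number of interleaving endpoint pairs along $\partial D_w$, a quantity $I_w(\pi)$ depending on $\pi$ alone. Choosing the expansion small enough that all clusters $y_v$, the chords $\gamma'(y_vy_w)$ and their thickenings lie inside $N_u$ while every new pipe-neighbourhood lies inside $N_{uv}$, the only new pipe crossings of $\gamma'$ are those among the straight chord-pipes and no old pipe crossing is lost; thus $\crn_2(\gamma',\lambda')=\crn_2(\gamma,\lambda)+K$, where $K$ is the total weighted number of crossings among the straight chord-pipes and is independent of all orders $\pi$. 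Moreover, for every cluster $w\ne u$ the rotation at $w$ and the set of edges of $G'$ at $w$ are unchanged under the natural identification $\pi_{uw}\leftrightarrow\pi_{wy_w}$, so $I_w$ is literally the same function on both sides. Therefore $\crn(\gamma\circ\lambda)=\crn(\gamma'\circ\lambda')$ reduces to the local identity
$$I_u(\{\pi_{uv}\})\;=\;K\;+\;\min_{\{\pi_{y_vy_w}\}}\ \sum_{v}\,I_{y_v}\bigl(\{\pi_{uv}\},\{\pi_{y_vy_w}\}\bigr)\qquad\text{for every fixed }\{\pi_{uv}\};$$
adding $\sum_{w\ne u}I_w$ and taking the overall minimum over all orders then gives $\crn_1(\lambda)=\crn_1(\lambda')+K$, which with the $\crn_2$ identity yields the claim.

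\medskip\noindent\textbf{Proving the local identity.} First, $I_u(\{\pi_{uv}\})$ splits as $K$ plus the number of interleaving pairs among arcs whose pipe-pairs share a cluster: two arcs with \emph{disjoint} pipe-pairs interleave iff the pipe-pairs interleave in the rotation at $u$ (independently of the $\pi$'s), and summing these contributions over all such arc pairs is exactly $K$. For ``$\ge$'' I would take any chord-pipe orders and collapse the region $\bigcup_v D_{y_v}\cup\bigcup R_{y_vy_w}$ back into $D_u$: each edge $x_ax_c$, together with the stubs of its two incident edges, becomes an arc of a chord system in $D_u$ realising the given boundary orders, and its total number of crossings equals $\sum_v I_{y_v}$ plus the $K$ crossings between edges of distinct chord-pipes; being a valid chord drawing it has at least $I_u$ crossings, so $\sum_v I_{y_v}\ge I_u-K$. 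For ``$\le$'' I would start from a crossing-optimal chord system in $D_u$, place the $y_v$ on $\partial D_u$ in the induced cyclic order, group the arcs into bundles by pipe-pair, make each bundle parallel by routing its edges along the straight chord-pipe $y_vy_w$ with $\pi_{y_vy_w}$ the restriction of $\pi_{uv}$ to that bundle, and slide every crossing between two bundles sharing a boundary arc $A_{u,v}$ into $D_{y_v}$; the crossings between bundles with no common boundary arc are exactly the $K$ chord-pipe crossings. This yields a drawing of the expanded instance with the prescribed $\{\pi_{uv}\}$ and $\sum_v I_{y_v}\le I_u-K$.

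\medskip\noindent\textbf{Main obstacle.} I expect the ``$\le$'' direction to be the crux: carrying out the bundle-to-chord-pipe replacement with the orders $\pi_{uv}$ frozen, and in particular handling two arcs with the \emph{same} pipe-pair $\{v,w\}$, whose single possible crossing in $D_u$ must be reproduced by a crossing in exactly one of $D_{y_v},D_{y_w}$; one must check that copying $\pi_{uv}$ into $\pi_{y_vy_w}$ (to keep $D_{y_v}$ clean of it) forces that crossing to reappear in $D_{y_w}$ precisely when the two arcs interleaved in $D_u$. Everything else is either standard topology of arcs in a disk (in an optimal drawing two arcs cross iff their endpoints interleave) or bookkeeping with \eqref{eq:cr1-2}.
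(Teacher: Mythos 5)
Your proposal is correct and follows essentially the same route as the paper's own proof: your split of the crossings in $D_u$ into pairs with interleaving (disjoint) pipe-pairs --- which become exactly the $K=\text{CR}^{\times}_{\Lambda}(u)$ new pipe crossings absorbed into $\crn_2$ --- versus pairs whose pipe-pairs share a cluster --- which are reproduced inside the new disks $D_{y_v}$ --- is precisely the paper's $\text{CR}^{\times}_{\Lambda}(u)$/$\text{CR}^{<}_{\Lambda}(u)$ decomposition, and the two directions of your local identity correspond to the paper's two transfers of a drawing across $\partial D_u$ (via the homeomorphism $h$ and its inverse). The symmetry-breaking you flag as the main obstacle for two arcs with the same pipe-pair $\{v,w\}$ (copying $\pi_{uv}$ into $\pi_{y_vy_w}$ so that the single crossing is forced into $D_{y_w}$ exactly when the endpoints interleaved on $\partial D_u$) is exactly what the paper does with its fixed total order $<$ on the new clusters, and the check does go through.
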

We remark that $\crn(\gamma\circ\lambda)$ is invariant under the \textsf{ClusterExpansion}$(u)$ operation even in the presence of spurs, however the proof is somewhat simpler in the absence spurs, and Lemma~\ref{lem:cluster-exp} also establishes that \textsf{ClusterExpansion}$(u)$ does not create new spurs.

\begin{figure}[htbp]
\centering
\includegraphics[width=0.8\textwidth]{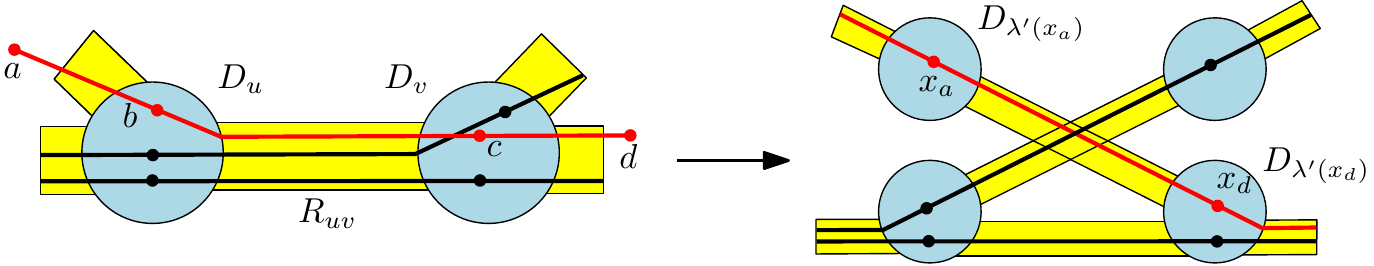}
\caption{\textsf{PipeExpansion}$(uv)$ for a safe pipe $uv$.}
\label{fig:pipeExpansion}
\end{figure}

\medskip\noindent\textbf{Pipe Expansion.}
A cluster $u\in V(H)$ is a \emph{base} of an incident pipe $uv$ if every vertex in $\lambda^{-1}[u]$ is incident to an edge in $\lambda^{-1}[uv]$. A pipe $uv\in E(H)$ is \emph{safe} if both $u$ and $v$ are bases of $uv$.
The following operation is defined on safe pipes.
See Figure~\ref{fig:expansion} for an illustration.
(We note that our algorithm would be correct even if \textsf{PipeExpansion}$(uv)$ were defined on all pipes, unlike the result in~\cite{AFT18}, since $\lambda$ does not contain spurs. We restrict this operation to safe pipe to simplify the runtime analysis.)

\begin{quote}\textsf{PipeExpansion}$(uv)$.
(1) Let $D_{uv}$ be a sufficiently narrow ellipse with foci at $\gamma(u)$ and $\gamma(v)$
that intersects only the images of pipes incident to $u$ and $v$.
(2) Subdivide every pipe $e\in E(H)$ incident to $u$ or $v$ with a new cluster $y_e$, let $\gamma(y_e):=\partial D_{uv}\cap \gamma(e)$.
(3) Subdivide every edge $ab\in E(G)$ such that $\lambda(a)\notin\{u,v\}$ and $\lambda(b)\in \{u,v\}$ with a new vertex $x_a$ such that $\lambda(x)=y_{\lambda(ab)}$.
(4) For every edge $bc\in \lambda^{-1}[uv]$, and the two neighbors $x_a$ and $x_d$ of $b$ and $c$, respectively, insert an edge $x_ax_d$ in $G$, insert a pipe $\lambda(x_a)\lambda(x_d)$ in $H$ if it is not already present, and draw this pipe in the plane as a straight-line segment between $\gamma(\lambda(x_a))$ and $\gamma(\lambda(x_d))$.
(5) Delete clusters $u$ and $v$ from $H$, and delete all vertices in $\lambda^{-1}[uv]$ from $G$.
(6) Return the resulting instance by $\lambda':G'\rightarrow H'$ and $\gamma':H'\rightarrow \mathbb{R}^2$.
\end{quote}
\begin{lemma}\label{lem:pipe-exp}
If $G$ is a cycle, $\lambda:G\rightarrow H$ has no spur, and $uv\in E(H)$ is a safe pipe,
then \textsf{PipeExpansion}$(uv)$ produces an instance where $G'$ is a cycle, $\lambda':G'\rightarrow H'$ has no spur, and $\crn(\gamma\circ\lambda) = \crn(\gamma'\circ\lambda')$.
\end{lemma}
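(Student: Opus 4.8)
The plan is to establish Lemma~\ref{lem:pipe-exp} in close analogy with Lemma~\ref{lem:cluster-exp}, decomposing the claim into three parts: (i) $G'$ is a cycle, (ii) $\lambda'$ has no spur, and (iii) $\crn(\gamma\circ\lambda)=\crn(\gamma'\circ\lambda')$. Parts (i) and (ii) are essentially bookkeeping. For (i), observe that \textsf{PipeExpansion}$(uv)$ replaces each maximal subpath of $G$ whose interior vertices all map into $\{u,v\}$ by a single new edge $x_ax_d$; since $\lambda$ has no spur, consecutive edges of $G$ are never both in $\lambda^{-1}[uv]$, so every vertex $b\in\lambda^{-1}[uv]$ (wait—more precisely, every vertex in $\lambda^{-1}[u]\cup\lambda^{-1}[v]$ that is an endpoint of an edge in $\lambda^{-1}[uv]$) lies on exactly one such subpath, and these subpaths partition a cyclic subset of $E(G)$; contracting each to an edge yields again a single cycle (one must also check the corner case where all of $G$ maps into $\{u,v\}$, handled separately). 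For (ii), a new vertex $x_a$ has $\lambda'(x_a)=y_{\lambda(ab)}$, and its two incident edges in $G'$ are the (subdivided) edge $ax_a$ with $\lambda'$-image the pipe $y_{\lambda(ab)}\lambda(a)$, and the new edge $x_ax_d$ with image a pipe between distinct clusters $y_e$; these two pipe images are distinct because $\lambda(a)\notin\{u,v\}$ while the $y_e$'s lie on $\partial D_{uv}$, so no spur is created at $x_a$, and no old vertex outside $\lambda^{-1}[u]\cup\lambda^{-1}[v]$ changes its local picture.

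The substance is part (iii), and I would prove it by two inequalities. For $\crn(\gamma'\circ\lambda')\le\crn(\gamma\circ\lambda)$: take an optimal perturbation $\Lambda$ of $\gamma\circ\lambda$ on the thickening $\mathcal{H}$, given combinatorially by the orders $\pi_\Lambda$ and by the crossing counts in each disk. The ellipse $D_{uv}$ together with the rectangle $R_{uv}$ and disks $D_u,D_v$ forms a region whose thickened analogue is homeomorphic to a disk; inside it the arcs of $\lambda^{-1}[uv]$ together with the stubs of the edges entering $u$ and $v$ from outside form a collection of arcs joining prescribed points on the boundary circle $\partial D_{uv}$, and I can read off an order in which these boundary points are connected. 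Re-routing: replace the piece of $\Lambda$ inside $D_u\cup R_{uv}\cup D_v$ by straight chords inside the new (smaller) disks $D_{y_e}$ and pipes, reproducing the same pairing of boundary points; this does not increase the number of crossings, because any two arcs that crossed an even number of times can be redrawn disjointly and the parity/count of essential crossings among chords in a disk is determined by the boundary pairing (this is the standard fact underlying the combinatorial representation). Outside the region nothing changes, and $\crn_2$ is unaffected since the new pipes are drawn as straight segments inside $D_{uv}$, which meets no pipes other than those incident to $u,v$, and the weights multiply out to exactly the old contributions. For the reverse inequality $\crn(\gamma\circ\lambda)\le\crn(\gamma'\circ\lambda')$, run the construction backwards: an optimal $\Lambda'$ for the expanded instance restricts, on $\partial D_{uv}$, to a pairing of the same boundary points, and since $uv$ is \emph{safe} every vertex of $\lambda^{-1}[u]$ (resp.\ $\lambda^{-1}[v]$) is matched through the pipe $uv$, so this pairing can be realized by a drawing of $G$ inside $D_u\cup R_{uv}\cup D_v$ with no more crossings than $\Lambda'$ has inside $D_{uv}$; glue it to the unchanged exterior.

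The main obstacle I anticipate is making precise the claim that ``the number of crossings forced among a set of arcs in a disk is determined by the pairing of their endpoints on the boundary circle,'' and in particular that re-routing through the expansion never \emph{increases} crossings. Concretely, one must argue that for chords of a disk the minimum number of crossings realizing a fixed boundary matching equals the number of \emph{interleaving} pairs, and that this quantity is preserved (not just bounded) when we pass from $\Lambda$ restricted to $D_u\cup R_{uv}\cup D_v$ to the straight-chord drawing in the $D_{y_e}$'s — this is exactly where safeness of $uv$ is used in the $\ge$ direction, ensuring no endpoint is ``stranded'' inside $D_u$ or $D_v$ and forced to create an extra crossing. A secondary technical point is the interaction with $\crn_2$: one has to verify that introducing the clusters $y_e$ on $\partial D_{uv}$ and the straight new pipes does not create or destroy any pipe-pipe crossing, which follows from the choice of $D_{uv}$ as a sufficiently narrow ellipse meeting only pipes incident to $u$ or $v$, together with the fact that each old pipe incident to $u$ or $v$ is merely subdivided, so its image as a point set is unchanged. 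Once these are nailed down, combining the two inequalities with \eqref{eq:cr1-2} gives $\crn(\gamma\circ\lambda)=\crn(\gamma'\circ\lambda')$.
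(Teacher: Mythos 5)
Your parts (i) and (ii), and your overall two-inequality strategy of cut-and-paste along $\partial(D_u\cup R_{uv}\cup D_v)$, match the paper's proof (which itself defers to the proof of Lemma~\ref{lem:cluster-exp} with $D_u\cup R_{uv}\cup D_v$ playing the role of $D_u$). However, your accounting for part (iii) has a genuine gap at exactly the delicate point of the lemma: you claim that the rerouting preserves the number of disk crossings \emph{and} that ``$\crn_2$ is unaffected since the new pipes are drawn as straight segments inside $D_{uv}$, which meets no pipes other than those incident to $u,v$.'' The new pipes do avoid all \emph{old} pipes, but they can cross \emph{each other}: two straight segments $\gamma'(\lambda'(x_a)\lambda'(x_d))$ and $\gamma'(\lambda'(x_{a'})\lambda'(x_{d'}))$ with endpoints on the ellipse $\partial D_{uv}$ cross precisely when those endpoints interleave. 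Correspondingly, two paths $(a,b,c,d)$ and $(a',b',c',d')$ through the pipe $uv$ whose outer clusters interleave in the rotation around $\partial(D_u\cup R_{uv}\cup D_v)$ are forced to cross inside $D_u$ or $D_v$ before the operation, but after the operation they land in four distinct new clusters and can no longer cross in any disk; their crossing reappears as a pipe--pipe crossing. So neither $\crn_1$ nor $\crn_2$ is separately invariant: the paper classifies the crossings in $D_u\cup R_{uv}\cup D_v$ into an ``interleaving'' type $\text{CR}^{\times}_{\Lambda}(uv)$ and a ``shared-cluster'' type $\text{CR}^{<}_{\Lambda}(uv)$, and shows $\crn_1$ drops by $\text{CR}^{\times}_{\Lambda}(uv)$ while $\crn_2$ rises by the same amount, so that only the sum in \eqref{eq:cr1-2} is preserved. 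Your conclusion is correct, but as written your argument rests on two false intermediate claims whose errors happen to cancel; the repair is precisely this two-type bookkeeping (together with the inductive realization of the boundary orders in the new disks $D_{y_e}$, which the paper carries out in the proof of Lemma~\ref{lem:cluster-exp}).

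A secondary remark: your use of safeness is slightly off. The paper notes the lemma holds even for non-safe pipes; safeness is not what prevents stranded endpoints in the $\ge$ direction (the absence of spurs does most of that work). Safeness guarantees that every vertex of $\lambda^{-1}[u]\cup\lambda^{-1}[v]$ is consumed by some path $(a,b,c,d)$ through the pipe $uv$, so that Steps~3--5 leave no leftover ``through traffic'' at $u$ or $v$ and $|E(G')|=|E(G)|$; this is what the termination argument (Lemma~\ref{lem:term}) relies on, not the crossing-count invariance.
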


We remark that Lemma~\ref{lem:pipe-exp} holds even for $uv$ that is not safe,
provided that $\lambda:G\rightarrow H$ has no spur.

\medskip\noindent\textbf{Main Algorithm.} Given an instance $\lambda:G\rightarrow H$ and $\gamma:H\rightarrow \mathbb{R}^2$,
we apply the two operations defined above as follows.

\smallskip

\begin{algorithm}[H]\DontPrintSemicolon
\textbf{Algorithm~1.} \KwIn{$(G, H, \lambda, \gamma)$}
 $U_0 \longleftarrow V(H)$\;
\For{every $u\in U_0$}{
    \textsf{ClusterExpansion}$(u)$\;}
\While{there is a safe pipe $uv\in E(H)$ such that $\deg_H(u)\geq 3$ or $\deg_H(v)\geq 3$}{
     \textsf{PipeExpansion}$(uv)$\;}
$uv \longleftarrow$ an arbitrary edge in $E(H)$.\;
\Return{$\crn_2(\gamma,\lambda)+|\lambda^{-1}[uv]|-1$.}
\end{algorithm}

\begin{lemma}\label{lem:term}
Algorithm~1 terminates.
\end{lemma}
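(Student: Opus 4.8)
The plan is to track a potential function that strictly decreases under each iteration of the main loop. The initial \textbf{for} loop runs \textsf{ClusterExpansion}$(u)$ exactly once for each $u\in U_0=V(H)$, where $U_0$ is a fixed snapshot of the original cluster set; since $U_0$ is finite and each call terminates in finite time (it performs a bounded number of subdivisions, insertions, and deletions), this phase clearly ends. The substance of the lemma is that the \textbf{while} loop, which repeatedly applies \textsf{PipeExpansion} to a safe pipe incident to a cluster of degree at least $3$, cannot run forever.

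First I would record what one \textsf{PipeExpansion}$(uv)$ does to the combinatorial size of $H$. It deletes the two clusters $u,v$ and introduces one new cluster $y_e$ on every pipe $e$ incident to $u$ or $v$ (other than $uv$ itself, which disappears). Thus the change in $|V(H)|$ is $\deg_H(u)+\deg_H(v)-2$ minus $2$, which can be positive, so $|V(H)|$ alone is not monotone. The right potential is something like $\Phi(H)=\sum_{u\in V(H)}\max(\deg_H(u)-2,\,0)$, the total ``excess degree'' over the cycle-degree $2$, possibly combined lexicographically with $|E(H)|$ or $|V(G)|$. The key observations I would establish are: (i) every newly created cluster $y_e$ has degree exactly $2$ in $H'$ (it subdivides a single pipe and the only new pipes added in step (4) are straight segments between such $y$-clusters, each contributing to exactly the endpoints determined by a matching on the edges of $G$ through the pipe $uv$ — here I must check that step (4) does not raise any $y_e$ above degree $2$, using that $uv$ is \emph{safe}, so the edges of $G$ through $uv$ pair up the boundary crossings on the two sides cleanly); and (ii) removing $u$ and $v$ strictly decreases $\Phi$ because at least one of them had degree $\geq 3$ (the loop guard), so it contributed a positive term that vanishes, while no surviving or new cluster gains excess degree. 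Hence $\Phi(H)$ strictly decreases at each iteration, and since $\Phi(H)\geq 0$, the loop halts.

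The step I expect to be the main obstacle is observation (i): verifying that \textsf{PipeExpansion} never creates a cluster of degree $3$ or more, and more generally that it does not secretly increase $\Phi$ elsewhere. This requires a careful reading of step (4) — for each $G$-edge $bc$ mapped to $uv$, one new $G$-edge $x_ax_d$ and possibly one new pipe $\lambda(x_a)\lambda(x_d)$ is added, and I need to argue the set of pipes incident to a given new cluster $y_e$ has size at most $2$ (namely the ``outside'' part of $e$ and at most one new short pipe), which is exactly where safeness of $uv$ (every vertex of $\lambda^{-1}[u]$ and $\lambda^{-1}[v]$ is matched by an edge of $\lambda^{-1}[uv]$) is used. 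A secondary subtlety is that pipes are inserted ``if not already present,'' so multiplicities do not blow up $\Phi$. Once (i) is in hand, the monotonicity argument is routine and the lemma follows; if $\Phi$ by itself turns out not to be strictly decreasing in degenerate cases, I would fall back to the lexicographic potential $(\Phi(H),\,|V(H)|)$ or bound the total number of \textsf{PipeExpansion} calls by $O(n)$ directly, as the surrounding text already asserts an $O(n)$ bound on the number of operations.
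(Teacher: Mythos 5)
Your reduction of the lemma to the termination of the \textbf{while} loop, via a nonnegative potential that strictly decreases under \textsf{PipeExpansion}, is the right strategy and matches the paper's structure. However, the potential you choose does not work, and the step you yourself flag as ``the main obstacle'' --- observation (i), that every new cluster $y_e$ has degree exactly $2$ in $H'$ --- is false. Step (4) inserts a new pipe $\lambda'(x_a)\lambda'(x_d)=y_{e_1}y_{e_2}$ for every edge $bc\in\lambda^{-1}[uv]$, where $e_1$ is the pipe at $u$ carrying $ab$ and $e_2$ is the pipe at $v$ carrying $cd$. Different edges of $\lambda^{-1}[uv]$ entering through the same $e_1$ may leave through different pipes $e_2,e_2',\dots$ at $v$, so a single $y_{e_1}$ can acquire one outside pipe plus arbitrarily many new pipes inside $D_{uv}$. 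Safety does not prevent this. Concretely, with $\deg_H(u)=\deg_H(v)=3$ and four edges of $\lambda^{-1}[uv]$ realizing all four pairings between the two pipes at $u$ and the two at $v$ (easily arranged on a long spur-free cycle), every new cluster has degree $3$, and your excess-degree potential jumps from $2$ to $4$; the lexicographic fallback fails for the same reason. Indeed, if (i) were true, the while-loop guard would fail after a single iteration, which contradicts the gradual untangling the algorithm is designed to perform. The remaining fallback --- invoking the $O(n)$ bound on the number of operations asserted in the surrounding text --- is circular, since that bound is exactly what the termination argument must supply.

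The paper's proof uses a different potential, $\Phi(G,H)=|E(G)|-|E(H)|$, which sidesteps degrees entirely. One shows that $|E(G')|=|E(G)|$: because $G$ is a cycle, $\lambda$ has no spurs, and $uv$ is safe, each edge of $\lambda^{-1}[uv]$ gives rise to exactly one new edge of $G'$ inside the ellipse and nothing else is created. Meanwhile $|E(H')|>|E(H)|$: the two deleted clusters are replaced by $\deg_H(u)+\deg_H(v)-2\ge 3$ new clusters (using the loop guard $\deg_H(u)\geq 3$ or $\deg_H(v)\geq 3$), each incident to at least one pipe inside $D_{uv}$, so the pipe count strictly grows. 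Since every pipe carries at least one edge of $G$, $\Phi\geq 0$, and termination follows. If you want to salvage your write-up, replace your potential with this one; the rest of your outline (finiteness of the initial for-loop, strict decrease at each iteration) then goes through.
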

\begin{proof}
By Lemmas~\ref{lem:cluster-exp} and \ref{lem:pipe-exp}, $\lambda:G\rightarrow H$ has no spurs in any step of the algorithm.
It is enough to show that the while loop of Algorithm~1 terminates. We define the potential function $\Phi(G,H)=|E(G)|-|E(H)|$,
and show that $\Phi(G,H)\geq 0$  and it decreases in every invocation of \textsf{PipeExpansion}$(uv)$.
Since $G$ is a cycle and $\lambda$ has no spur, every edge in $\lambda^{-1}[uv]$ is adjacent to one edge
in some other pipe incident to $u$ and one edge in some other pipe incident to $v$.
Each of these edges contributes to one edge in $E(G')$ inside the ellipse $D_{uv}$. Since $uv$ is safe, $G'$ has no other new edges.
Consequently, $|E(G')|=|E(G)|$. Since $\deg_H(u)\geq 3$ or $\deg_H(v)\geq 3$, \textsf{PipeExpansion}$(uv)$ replaces the clusters $u$ and $v$ with at least 3 clusters, each of which is incident to at least one pipe in the ellipse $D_{uv}$. Consequently, $|E(H')|>|E(H)|$, and so $\Phi(G,H)>\Phi(G',H')$, as claimed.
\hfill$\Box$\end{proof}

\begin{lemma}\label{lem:algo}
At the end of the while loop of Algorithm~1, $H$ is a cycle.
\end{lemma}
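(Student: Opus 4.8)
The plan is to show that when the while loop terminates, every cluster of $H$ has degree exactly $2$, and that $H$ is connected; since $G$ is a cycle and $\lambda$ has no spurs (by Lemmas~\ref{lem:cluster-exp} and~\ref{lem:pipe-exp}), $\lambda(G)$ traverses a closed walk in $H$, so if every cluster has degree $2$ and $H$ is connected then $H$ must be a single cycle. Connectivity of $H$ is immediate throughout the algorithm: $G$ is connected, $\lambda$ is surjective onto its image (every cluster and pipe that survives has some preimage), so $H=\lambda(G)$ is connected, and neither operation can disconnect it. Thus the heart of the argument is to rule out clusters of degree $\ne 2$.

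First I would dispose of clusters of degree $1$. Here the key observation is that the initial \textsf{ClusterExpansion} loop, followed by the structure of $\lambda$, forces every cluster to be a base of each of its incident pipes — or at least that every pipe remaining after the first {\bf for} loop is safe. Indeed, right after \textsf{ClusterExpansion}$(u)$ is applied to all original clusters, each new cluster $y_v$ has degree $2$ (one pipe toward the old location of $u$, one toward $v$), and the new ``internal'' clusters created in step~(4) lie on pipes all of whose $G$-edges pass through them; a short case check shows every resulting pipe is safe. Since \textsf{PipeExpansion} only contracts safe pipes and (as in the proof of Lemma~\ref{lem:term}) the created clusters $y_e$ again each sit on a pipe carrying all the relevant $G$-edges, safeness is preserved: at every later stage, \emph{every} pipe of $H$ is safe. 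Consequently no cluster can have degree $1$: if $\deg_H(u)=1$ with incident pipe $uv$, then since $G$ is a cycle with no spur, an edge of $G$ mapped into a neighborhood of $u$ must leave through some pipe incident to $u$ other than $uv$ — impossible — unless $\lambda^{-1}[u]=\emptyset$, but then $u$ would not be in $H=\lambda(G)$.

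Next I would handle clusters of degree $\ge 3$. Suppose some cluster $u$ has $\deg_H(u)\ge 3$ at termination. Since every pipe is safe (by the argument above) and $u$ is incident to at least one pipe $uv$, the pair $uv$ is a safe pipe with $\deg_H(u)\ge 3$, so the while-loop guard is still satisfied — contradicting termination. Hence at termination every cluster has degree exactly $2$. Combined with connectivity of $H$ and the fact that $V(H)\ne\emptyset$ (as $G$ is a nonempty cycle), $H$ is a single cycle $C_k$ for some $k\ge 1$, which is exactly what the final line of Algorithm~1 assumes when it invokes Lemma~\ref{lem:cycle} to read off $\crn_1(\lambda)=|\lambda^{-1}[uv]|-1$.

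The main obstacle I anticipate is the bookkeeping claim that \emph{every pipe of $H$ is safe after the initial {\bf for} loop, and that safeness is an invariant of \textsf{PipeExpansion}}. This requires carefully tracking, for each newly created cluster and pipe, which $G$-vertices map to it and which $G$-edges map to the incident pipes; in particular one must check that the edges inserted in step~(4) of each operation, together with the subdivision vertices $x_a$, do not create a cluster some of whose preimage vertices miss the ``obvious'' pipe. Once this invariant is in hand, the degree analysis above is routine, and the lemma follows.
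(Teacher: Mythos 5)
Your overall skeleton (no degree-$1$ clusters, no degree-$\ge 3$ clusters at termination, connectivity, hence a cycle) is fine, but the load-bearing claim --- that after the initial \textbf{for} loop \emph{every} pipe of $H$ is safe, and that safeness of all pipes is preserved by \textsf{PipeExpansion} --- is false, and it is exactly the claim your degree-$\ge 3$ argument rests on. Consider a cluster $u$ with $\deg_H(u)\ge 3$, say with neighbors $v,w,w'$, where the cycle $G$ passes through $u$ several times, once via the pipe pair $(uv,uw)$ and once via $(uv,uw')$. After \textsf{ClusterExpansion}$(u)$, the boundary cluster $y_v$ contains two subdivision vertices: one incident to an edge in $\lambda'^{-1}[y_vy_w]$ and another incident to an edge in $\lambda'^{-1}[y_vy_{w'}]$. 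Hence $y_v$ is a base of neither interior pipe $y_vy_w$ nor $y_vy_{w'}$, and those pipes are not safe. What is true (and what the paper uses) is only the weaker statement that each new cluster is a base of the \emph{one} pipe lying in the exterior of the disk or ellipse, plus the fact that in the absence of spurs a degree-$2$ cluster is a base of both its pipes. With only this, "$\deg_H(u)\ge 3$ implies some incident pipe is safe" no longer follows, and your contradiction with the loop guard evaporates.

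The paper closes this gap with a genuinely different argument: it defines $s(v)$ to be the exterior pipe that $v$ is guaranteed to base, starts at a cluster $v_1$ of degree $\ge 3$, and follows the walk $v_1,v_2,\ldots$ with $s(v_i)=v_iv_{i+1}$. If the walk ever reaches $v_i$ with $s(v_i)=s(v_{i-1})$, or reaches a degree-$2$ cluster, the current pipe is safe (and one of its endpoints has degree $\ge 3$, since the walk left a high-degree cluster); otherwise the walk would close into a cycle of $s$-pipes, which is ruled out by looking at the most recently created cluster on that cycle, whose $s$-pipe points out of its disk or ellipse while its predecessor's pipe lies inside --- a contradiction. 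You would need to replace your global "all pipes are safe" invariant with this (or an equivalent) local base-tracking argument; as written, the proof does not go through. The degree-$1$ part of your argument and the final "connected $2$-regular graph is a cycle" step are fine.
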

\begin{proof}
It is enough to show that if $H$ is not a cycle in the while loop of Algorithm~1,
then there is a safe pipe $uv\in E(H)$ such that $\deg_H(u)\geq 3$ or $\deg_H(v)\geq 3$.
Observe that every cluster created by \textsf{ClusterExpansion}$(u)$ (resp., \textsf{PipeExpansion}$(uv)$) is
a base for the unique incident pipe in the exterior of disk $D_u$ (resp., ellipse $D_{uv}$). Let $s:V(H)\rightarrow E(H)$
be a function that maps every cluster to that incident pipe. Note also that the input does not have spurs, and no spurs
are created in the algorithm by Lemmas~\ref{lem:cluster-exp} and~\ref{lem:pipe-exp}. In the absence of spurs,
if $u\in V(H)$ and $\deg_H(u)=2$, then $u$ is a base for both incident pipes.

Assume that in some step of the while loop, $H$ is not a cycle.
Let $v_1\in V(H)$ be an arbitrary cluster such that $\deg_H(v_1)\geq 3$.
Construct a maximal simple path $(v_1,v_2,\ldots, v_\ell)$ incrementally such that
$s(v_i)=v_iv_{i+1}$ for $i=1,2,\ldots \ell$. If the path encounters
a cluster $v_i$ where $s(v_i)=s(v_{i-1})$, then the pipe $v_{i-1}v_i$ is safe.
Similarly, if $\deg_H(v_{i+1})=2$, then $v_iv_{i+1}$ is safe.
Otherwise, the path ends with a repeated cluster: $s(v_\ell)=v_\ell v_i$, for some $1\leq i< \ell-1$,
and so we obtain a cycle $(v_i,v_{i+1},\ldots , v_\ell)$ of at least 3 vertices.
Let $v_j$, $i\leq j\leq \ell$, be the cluster created in the most recent
\textsf{ClusterExpansion}$(u)$ or \textsf{PipeExpansion}$(uv)$ operation.
Then $s(v_j)$ is a pipe in the exterior of a disk $D_u$ or an ellipse $D_{uv}$.
Hence, the pipe $v_{j-1}v_j$ is in the interior of $D_u$ or $D_{uv}$, moreover
$v_j$ and $v_{j-1}$ were created by the same operation. However, this
implies $s(v_{j-1})\neq v_{j-1}v_j$, contradicting the assumption that
$(v_i,v_{i+1},\ldots , v_\ell)$  is a cycle. We conclude that the path
finds a safe pipe before any cluster repeats.
\hfill$\Box$\end{proof}

\begin{lemma}\label{lem:cor}
Algorithm~1  returns $\crn(\gamma\circ \lambda)$.
\end{lemma}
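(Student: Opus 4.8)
The plan is to combine the three preceding lemmas. By Lemma~\ref{lem:term} the algorithm terminates, and by Lemma~\ref{lem:algo}, at the end of the while loop $H$ is a cycle, say $H=C_k$. First I would observe that the initial \textsf{ClusterExpansion} phase and the subsequent \textsf{PipeExpansion} phase each preserve the value $\crn(\gamma\circ\lambda)$: this is exactly the content of Lemmas~\ref{lem:cluster-exp} and~\ref{lem:pipe-exp}, applied once per operation, together with the fact (also from those lemmas) that after each operation $G$ is still a cycle and $\lambda$ still has no spur, so that the hypotheses needed to invoke the next operation remain satisfied. Hence, writing $(G^\ast,H^\ast,\lambda^\ast,\gamma^\ast)$ for the instance reached at the end of the while loop, we have $\crn(\gamma\circ\lambda)=\crn(\gamma^\ast\circ\lambda^\ast)$, with $G^\ast=C_n$ (the number of vertices of $G$ is unchanged by the argument in the proof of Lemma~\ref{lem:term}, up to subdivisions, which do not affect $\crn$) and $H^\ast=C_k$ for some $k$.

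Next I would evaluate $\crn(\gamma^\ast\circ\lambda^\ast)$ directly using the decomposition~\eqref{eq:cr1-2}, namely $\crn(\gamma^\ast\circ\lambda^\ast)=\crn_1(\lambda^\ast)+\crn_2(\gamma^\ast,\lambda^\ast)$. Since $H^\ast$ is a cycle and $\lambda^\ast$ has no spur, $\lambda^\ast:C_n\to C_k$ winds around $H^\ast$ exactly $n/k$ times, so Lemma~\ref{lem:cycle} gives $\crn_1(\lambda^\ast)=\frac{n}{k}-1$. Moreover, for any pipe $uv\in E(H^\ast)$ we have $|(\lambda^\ast)^{-1}[uv]|=n/k$ (each of the $k$ pipes carries the same number of edges of $G^\ast$, since $G^\ast$ winds around uniformly and has no spur), so $\frac{n}{k}-1=|(\lambda^\ast)^{-1}[uv]|-1$. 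Finally, $\crn_2(\gamma^\ast,\lambda^\ast)$ is precisely the quantity the algorithm computes in its return statement on the final instance. Putting these together, the returned value $\crn_2(\gamma^\ast,\lambda^\ast)+|(\lambda^\ast)^{-1}[uv]|-1$ equals $\crn_2(\gamma^\ast,\lambda^\ast)+\crn_1(\lambda^\ast)=\crn(\gamma^\ast\circ\lambda^\ast)=\crn(\gamma\circ\lambda)$.

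The main obstacle is the bookkeeping that makes the previous paragraph rigorous: I must argue carefully that at the end of the while loop all pipes of the cycle $H^\ast$ carry equally many edges of $G^\ast$, and that this common weight equals the winding number $n/k$. This follows because $G^\ast$ is a cycle mapped simplicially and without spurs onto the cycle $H^\ast$, so traversing $G^\ast$ once traverses $H^\ast$ an integral number $n/k$ of times, passing through each pipe exactly $n/k$ times; hence the edge set of $G^\ast$ is partitioned into the $k$ pipe-preimages, each of size $n/k$. A secondary point to check is that subdivisions introduced by the operations, and the assumption earlier in the paper that $\lambda$ is simplicial, do not change $\crn$; this is immediate since subdividing an edge of $G$ (and correspondingly a pipe of $H$) does not alter the combinatorial representation $\pi_\Lambda$ nor the multiset $C$ of crossing pipe pairs, so both $\crn_1$ and $\crn_2$ are unaffected. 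One further detail worth stating explicitly is why the return value does not depend on the choice of the pipe $uv$: by the uniform-weight observation above, $|(\lambda^\ast)^{-1}[uv]|$ is the same for every pipe, so the output is well defined.
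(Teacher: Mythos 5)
Your proposal is correct and follows essentially the same route as the paper: invariance of $\crn(\gamma\circ\lambda)$ under the operations (Lemmas~\ref{lem:cluster-exp} and~\ref{lem:pipe-exp}), the decomposition~\eqref{eq:cr1-2}, and Lemmas~\ref{lem:algo} and~\ref{lem:cycle} to evaluate $\crn_1$ on the final cycle-to-cycle instance. The extra bookkeeping you supply (all pipes of the final $H$ carry the same weight, equal to the winding number, so the returned value is well defined and matches Lemma~\ref{lem:cycle}) is left implicit in the paper but is exactly the right justification.
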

\begin{proof}
By~\eqref{eq:cr1-2}, $\crn(\gamma\circ\lambda)=\crn_1(\lambda)+\crn_2(\gamma,\lambda)$.
Here $\crn_2(\gamma,\lambda)$ can be computed by a line sweep of the drawing $\gamma(H)$.
By Lemmas~\ref{lem:cycle} and \ref{lem:algo}, at the end of the algorithm, $\crn_1(\lambda)=|\lambda^{-1}[uv]|-1$
for an arbitrary edge $uv\in E(H)$. By Lemmas~\ref{lem:cluster-exp} and~\ref{lem:pipe-exp}, $\crn(\gamma\circ\lambda)$
is invariant in the operations, so the algorithm reports $\crn(\gamma\circ\lambda)$ for the input instance.
\hfill$\Box$\end{proof}

\medskip\noindent\textbf{Running Time.}
The efficient implementation of our algorithm relies on the following data structures.
For every cluster $u\in V(H)$ we maintain the set of vertices of $V(G)$ in $\lambda^{-1}[u]$.
For every pipe $uv\in E(H)$, we maintain $\lambda^{-1}[uv]\subset E(G)$, the weight
$w(uv)=|\lambda^{-1}[uv]|$, and the sum of weights of all pipes that cross $uv$, that we denote by $W(uv)$.
Then we have $\crn_2(\gamma,\lambda)=\frac12 \sum_{uv\in E(H)} w(uv)W(uv)$. We also maintain the
current value of $\crn_2(\gamma,\lambda)$.
We further maintain indicator variables that support checking the conditions of the while loop in
Algorithm~1: (i) whether the cluster is a base for the pipe,
(ii) whether a cluster has degree 2, and (iii) whether a pipe is safe.

\begin{lemma}\label{lem:time}
With the above data structures, Algorithm~1 runs in $O((M+R)\log M)$ time,
where $M=|E(H)|+|E(G)|$ and $R=\crn(\gamma\circ\lambda)<M^2$.
\end{lemma}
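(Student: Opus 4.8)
The plan is to charge the running time to three parts --- the initial preprocessing, the $O(M)$ expansion operations of the \texttt{for} and \texttt{while} loops, and the final read-off --- while maintaining the scalar $\crn_2(\gamma,\lambda)$ incrementally. First I would build the arrangement of the segments $\gamma(H)$ by a Bentley--Ottmann sweep and initialize all data structures: the sets $\lambda^{-1}[u]$ and $\lambda^{-1}[uv]$, the weights $w(uv)$, the sums $W(uv)$, the value $\crn_2(\gamma,\lambda)=\tfrac12\sum_{uv}w(uv)W(uv)$, and the indicator variables. Since each crossing pipe pair $\{e_1,e_2\}$ contributes $w(e_1)w(e_2)\ge 1$ to $\crn_2(\gamma,\lambda)\le\crn(\gamma\circ\lambda)=R$, the drawing $\gamma(H)$ has at most $R$ crossing pipe pairs, so the sweep and the setup take $O((M+R)\log M)$ time. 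Algorithm~1 then performs $|V(H)|=O(M)$ calls to \textsf{ClusterExpansion} and, by the potential argument in the proof of Lemma~\ref{lem:term} (each \textsf{PipeExpansion} decreases $\Phi(G,H)=|E(G)|-|E(H)|\ge 0$ by at least $1$), at most $|E(G)|=O(M)$ calls to \textsf{PipeExpansion}. The instance stays of size $O(M)$: \textsf{ClusterExpansion}$(u)$ increases $|E(G)|$ and $|E(H)|$ each by at most $|\lambda^{-1}[u]|$, and $\sum_u|\lambda^{-1}[u]|=|V(G)|$, so after the \texttt{for} loop the size is still $O(M)$; and each \textsf{PipeExpansion} leaves $|E(G)|$ unchanged while keeping $|E(H)|\le|E(G)|$ because $\Phi\ge 0$.

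Next I would bound the cost of a single operation under the given data structures. A call \textsf{ClusterExpansion}$(u)$ scans the $|\lambda^{-1}[u]|$ vertices of $G$ mapped to $u$ and the $\deg_H(u)$ incident pipes, spending $O(\log M)$ per item and per new pipe, hence runs in $O((|\lambda^{-1}[u]|+\deg_H(u))\log M)$ time; summed over the original clusters this is $O((|V(G)|+|E(H)|)\log M)=O(M\log M)$. For a safe pipe $uv$ one has $|\lambda^{-1}[u]|=|\lambda^{-1}[v]|=w(uv)$ (no spurs makes $b\mapsto$ its $uv$-edge a bijection $\lambda^{-1}[u]\to\lambda^{-1}[uv]$), and every pipe incident to $u$ or $v$ other than $uv$ carries at least one edge adjacent to $\lambda^{-1}[uv]$, so $\deg_H(u),\deg_H(v)\le w(uv)+1$ and, if \textsf{PipeExpansion}$(uv)$ creates $I$ internal pipes inside $D_{uv}$, also $\deg_H(u)+\deg_H(v)=O(I)$. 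By keeping the edges of each pipe grouped according to the (at most two) incident pipes of their endpoints, \textsf{PipeExpansion}$(uv)$ can be implemented in $O(I\log M)$ time --- without any term proportional to $w(uv)$ --- since each group of $uv$-edges sharing a pair of incident pipes is redirected to a single new internal pipe by pointer surgery and each side pipe is updated by relabeling one endpoint. Because each \textsf{PipeExpansion} raises $|E(H)|$ by exactly $I-1\ge 1$ while $|E(H)|$ never exceeds $|E(G)|=O(M)$, we get $\sum I=O(M)$, so the whole \textsf{PipeExpansion} phase costs $O(M\log M)$.

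It remains to maintain $\crn_2(\gamma,\lambda)$. Each expansion is geometrically local: pipes not incident to the expanded cluster(s) are merely relabeled via pointers, keeping their crossing lists and $W$-values, and the only new crossings are among the internal pipes created inside the disk $D_u$ or the ellipse $D_{uv}$. Crucially no recorded crossing is ever destroyed: the expansion is applied as one atomic step, so the short pipe fragments that it creates and immediately deletes inside $D_u$ or $D_{uv}$ never have their crossings recorded, and the only previously existing pipe that \textsf{PipeExpansion}$(uv)$ deletes outright is $uv$ itself, which is uncrossed by non-incident pipes (needed so that the narrow ellipse $D_{uv}$ can avoid them). Hence after each expansion I would sweep only the $I$ new internal pipes, in $O((I+c)\log M)$ time where $c$ is the number of new crossing pipe pairs, updating the two $W$-values and $\crn_2(\gamma,\lambda)$ in $O(\log M)$ per new pair; since $\sum I=O(M)$ and $\sum c$ is at most the number of crossing pipe pairs in the final instance, i.e.\ at most $\crn_2(\gamma,\lambda)\le R$, this maintenance costs $O((M+R)\log M)$. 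The final line of Algorithm~1 then reads off $\crn_2(\gamma,\lambda)$ and $|\lambda^{-1}[uv]|$ in $O(1)$ time, and $R<M^2$ because in a crossing-minimizing perturbation any two of the $|E(G)|<M$ edges of $G$ cross $O(1)$ times. Adding the three parts yields $O((M+R)\log M)$.

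The step I expect to be the main obstacle is the amortized analysis of the \textsf{PipeExpansion} phase. A naive implementation makes a single \textsf{PipeExpansion}$(uv)$ cost $\Theta(w(uv)\log M)$, and the crude bound $\sum w(uv)=O(M^2)$ is too weak, so one genuinely needs the grouped representation that reduces the per-call cost to $O(I\log M)$, together with the structural fact --- which is exactly where the restriction of \textsf{PipeExpansion} to \emph{safe} pipes is used (Lemmas~\ref{lem:cluster-exp}--\ref{lem:algo}) and the reason this restriction is imposed --- that the monotone potential $\Phi$ then certifies $\sum I=O(M)$. Checking that the grouped lists can be maintained in $O(\log M)$ amortized time per created pipe over the whole execution, that relabelings and deletions never force a per-crossing scan, and that the ``uncrossed'' invariant needed for each expanded pipe is preserved, is the part of the argument demanding the most care.
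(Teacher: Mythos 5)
Your overall decomposition (sweep-based preprocessing, amortized cost of the two expansion phases, incremental maintenance of $\crn_2$, final read-off) matches the paper's, and your accounting of the number of operations and of new pipe crossings is sound. The one place where you diverge substantially is the cost of splitting $\lambda^{-1}[uv]$ inside \textsf{PipeExpansion}$(uv)$, and that is exactly where your argument has a gap. You claim the split can be done in $O(I\log M)$ time, ``without any term proportional to $w(uv)$,'' by keeping the edges of each pipe pre-grouped according to the pipes containing the neighboring edges of their endpoints. Redirecting each group to its new internal pipe by pointer surgery is indeed $O(1)$ per group, but the invariant you need for \emph{future} calls is that every surviving pipe is still correctly grouped, and \textsf{PipeExpansion}$(uv)$ destroys this for the external pipes: for a pipe $e$ incident to $u$, the edges of $\lambda^{-1}[e]$ previously all continued into $uv$ (one group), and after the operation they continue into the various internal pipes $y_ey_f$, so the grouping of $\lambda^{-1}[e]$ at its $y_e$-end must be refined into up to $I$ parts. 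Materializing that refinement (the group lists are threaded through the $uv$-edges, not through the adjacent $e$-edges) costs $\Theta(w(e))$ unless one introduces a lazy or implicit representation, and you give no such construction --- you explicitly flag this maintenance as ``the part demanding the most care'' but do not carry it out. As written, the per-call cost is therefore not bounded by $O(I\log M)$, and the crude alternative $\sum_{\text{calls}} w(uv)$ can be $\Theta(M^2)$.

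The paper avoids this entirely with a different amortization: it computes the parts of the split incrementally \emph{in parallel}, stopping when all but the largest part is known and absorbing the remainder into that part in $O(1)$ time, so a single call costs $O(w(uv)-\max_i w(u_iv_i))$; this is charged to the edges that land in a part of at most half the weight, giving each edge a total charge of $O(w\log w)$ and hence $O(M\log M)$ over the whole while loop. No pre-maintained grouping is needed. A secondary inaccuracy: your claim that the deleted pipe $uv$ ``is uncrossed by non-incident pipes'' is not justified and is not needed; when $uv$ is crossed by some pipe $e'$, the paper's accounting replaces the pair $(uv,e')$ by the pairs $(u_iv_i,e')$ for all $i$ (every pipe crossing $uv$ crosses every internal pipe), updating each $W(u_iv_i)$ from $W(uv)$, and the monotonicity of the number of crossing pairs still bounds the total work by $O((M+R)\log M)$.
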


\section{NP-Completeness in the Presence of Spurs}
\label{sec:hardness}

In this section, we prove Theorem~\ref{thm:hardness}. In a problem instance, we are given
a simplicial map $\lambda:G\rightarrow H$, a straight-line drawing $\gamma:H\rightarrow \mathbb{R}^2$,
and a nonnegative integer $K$, and ask whether $\crn(\gamma\circ \lambda)\leq K$.

\begin{lemma}\label{lem:NP}
The above problem is in NP.
\end{lemma}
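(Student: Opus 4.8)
The plan is to exhibit a polynomial-size certificate for a ``yes'' instance and show it can be verified in polynomial time. By the combinatorial reformulation in Section~\ref{sec:prelim}, a perturbation $\psi=\Gamma\circ\Lambda$ is fully described, up to the number of crossings it realizes, by the collection of total orders $\pi_\Lambda=\{\pi_{uv}:uv\in E(H)\}$, one per pipe, where $\pi_{uv}$ orders the edges of $\lambda^{-1}[uv]\subseteq E(G)$. This is the certificate I would use: it has size $O(\sum_{uv}|\lambda^{-1}[uv]|)=O(|E(G)|)$, which is polynomial in the input.

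First I would argue that $\pi_\Lambda$ suffices to recover the crossing count. By (P1)--(P3), every crossing of $\Lambda(G)$ either lies in some disk $D_u$ or arises from two pipes $e_1,e_2\in E(H)$ whose images $\gamma(e_1),\gamma(e_2)$ cross in the plane. The latter contribute exactly $\crn_2(\gamma,\lambda)=\sum_{\{e_1,e_2\}\in C}w(e_1)w(e_2)$, which depends only on the given instance and is computable in polynomial time (e.g., by a line sweep, as in the proof of Lemma~\ref{lem:cor}). For the former, inside each disk $D_u$ the arcs may be taken to be straight-line segments whose endpoints on $\partial D_u$ appear in the cyclic order dictated by the orders $\pi_{uv}$ of the incident pipes (and an arbitrary but fixed convention for pairing the two endpoints of each edge of $G$ passing through $u$); the number of crossings in $D_u$ is then the number of ``inversions'' among these chords, which is determined combinatorially and computable in time polynomial in $\deg(u)$ and the weights of incident pipes. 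Summing over all $u\in V(H)$ gives a value $\crn_1(\pi_\Lambda)\geq \crn_1(\lambda)$, with equality for the optimal choice of $\pi_\Lambda$; hence the verifier accepts iff $\crn_1(\pi_\Lambda)+\crn_2(\gamma,\lambda)\leq K$.

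Second, I would check the converse direction needed for completeness: if $\crn(\gamma\circ\lambda)\le K$, then there is a drawing $\Lambda$ satisfying (P1)--(P3) with at most $K$ crossings, and reading off its pipe orders yields a valid certificate. This is where a small amount of care is required: one must confirm that an optimal $\eps$-perturbation $\psi_\eps$ can be homotoped, without increasing crossings, into the normal form of Section~\ref{sec:prelim} (arcs straight in each rectangle $R_{uv}$ and each disk $D_u$, meeting each $\partial D_u$ exactly once, all crossings inside disks). The paper already asserts this reduction is possible, so I would simply invoke it; the only genuine obstacle is making sure the ``all crossings lie in disks'' property (P3) can be achieved without loss — intuitively, any crossing occurring inside a rectangle $R_{uv}$ between two edges both routed through that rectangle can be slid into an endpoint disk, since the two edges emanate from and return to the same pair of disks. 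Once (P1)--(P3) are in force, the pipe orders $\pi_\Lambda$ are well defined and the number of crossings equals $\crn_1(\pi_\Lambda)+\crn_2(\gamma,\lambda)$ exactly, closing the loop.

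I expect the main obstacle to be the bookkeeping in the verification step rather than any conceptual difficulty: one has to specify precisely how the endpoints of the edges of $G$ that traverse a cluster $u$ are matched on $\partial D_u$ (each such edge contributes two endpoints, on the arcs $A_{u,v}$ and $A_{u,w}$ for its two incident pipes, and within an arc the order is inherited from the corresponding $\pi_{uv}$), and then observe that the minimum number of pairwise crossings of the resulting chord diagram in $D_u$ is forced — there is no further freedom, because the endpoints' cyclic positions are already fixed by $\pi_\Lambda$. Given that, computing $\text{CR}(u)$ is just counting crossing pairs of chords, clearly polynomial. Combining this with the polynomial-time computation of $\crn_2(\gamma,\lambda)$ shows the certificate is polynomial-time checkable, establishing membership in NP.
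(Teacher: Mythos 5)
Your proposal is correct and follows essentially the same route as the paper: the certificate is the combinatorial representation $\pi_\Lambda$ (the per-pipe total orders), from which the crossings inside each disk $D_u$ are determined by the induced cyclic order of chord endpoints on $\partial D_u$, while $\crn_2(\gamma,\lambda)$ is computed from the instance alone, so the verifier runs in polynomial time. The paper's proof is just a terser version of this; your added remarks on normalizing an optimal perturbation into the (P1)--(P3) form and on counting chord interleavings are exactly the details the paper leaves implicit in its Section~\ref{sec:prelim} reformulation.
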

\begin{proof}
A feasible drawing $\Gamma\circ \Lambda:G\rightarrow \mathbb{R}^2$ with $\crn(\Gamma\circ\Lambda)\leq K$ can be witnessed by a combinatorial representation of $\Lambda$. Specifically, we can determine $\crn_2(\gamma,\lambda)$ by computing the weight of each pipe $uv\in E(H)$ in $O(|E(G)|+|E(H)|)$ time, and finding all edge-crossings in the drawing $\gamma(H)$ in $O(|E(H)|\log |E(H)|)$ time.
Given a combinatorial representation of a drawing $\Lambda:G\rightarrow \mathcal{H}$,
we can determine the number of crossings at all nodes $u\in V(H)$ in $O(\sum_{u\in V(H)}|\lambda^{-1}[u]|)=O(|E(G)|)$ time.
\hfill$\Box$\end{proof}

We prove NP-hardness by a reduction from 3SAT. Let $\Phi$ be a boolean formula in 3CNF with a set $\mathcal{X}=\{x_1,\ldots , x_n\}$ of variables and a set $\mathcal{C}=\{c_1,\ldots, c_m\}$ of clauses. We construct graphs $G$ and $H$, a simplicial map $\lambda:G\rightarrow H$, a straight-line drawing $\gamma:H\rightarrow \mathbb{R}^2$, and an integer $K\in \mathbb{N}$ such that $\crn(\gamma\circ \lambda)\leq K$ if and only if $\Phi$ is satisfiable.

\medskip\noindent\textbf{First Construction: Disjoint Union of Paths.}
Refer to Fig.~\ref{fig:variable}.

\begin{figure}[htbp]
	\centering
	\includegraphics[width=0.9\textwidth]{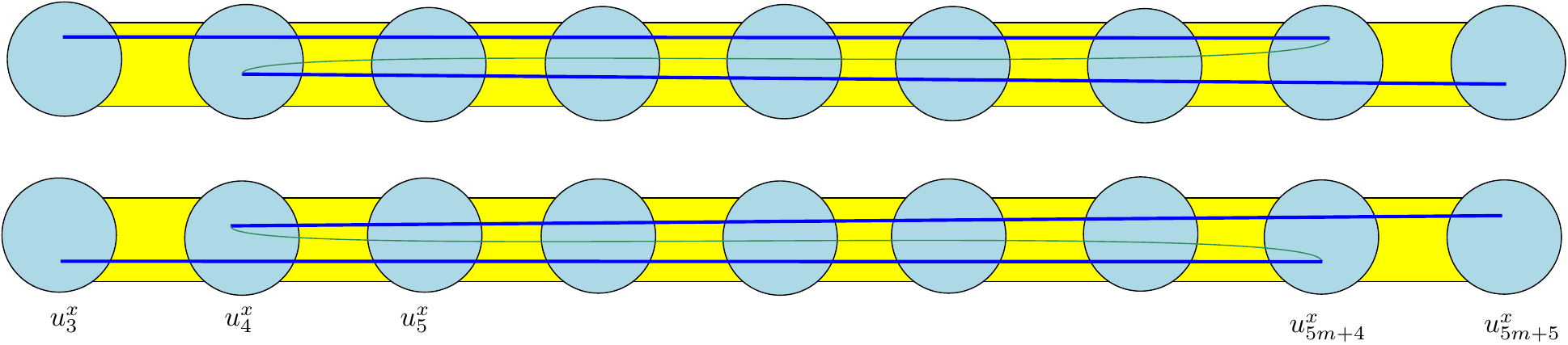}
	\caption{\textsf{Two embeddings of $G_x$. 
        Top: $P^x_1$ is above $P^x_3$. Bottom: $P^x_1$ is below $P^x_3$.}
	\label{fig:variable}}
\end{figure}

\noindent\textbf{Construction of $H$ and $\gamma: H\rightarrow \mathbb{R}^2$.}
For every variable $x\in \mathcal{X}$, create a path
$H_x=(u^x_3,u^x_4,\ldots , u^x_{5m+5})$.

For $i=1,\ldots, m$, the $i$-th clause $c_i\in \mathcal{C}$ is associated to at most three
(negated or non-negated) variables, say, $x,y,z\in \mathcal{X}$. Identify the clusters $u^x_{5i+\ell}=u^y_{5i+\ell}=u^z_{5i+\ell}$ for $\ell=0,1,2,3$ and we denote the resulting clusters also by $u_{5i+\ell}$ and associate them with clause $c_i$.
Add two new clusters $v_i$ an $w_i$, and two new pipes $v_iu^x_{5i+1}$ and $w_iu^x_{5i+2}$.
This completes the description of $H$.

For every $i=1,\ldots , m$, we map clusters $u_{5i},\ldots ,u_{5i+3}$
 to integer points
$5i,\ldots , 5i+3$ on the $x$-axis.
The two additional clusters, $v_i$ and $w_i$, are mapped to points
$\gamma(v_i)=(5i+1,1)$ and $\gamma(w_i)=(5i+2,-1)$, above and below the $x$-axis.
The remaining clusters and pipes of $H_x$, $x\in \mathcal{X}$,
are mapped to integer points in the horizonal line $y=j+1$.
Specifically, $\gamma(u_i^{x_j})=(i,j+1)$, for $3\leq i\leq 5m+5$,
except for clusters $u_i^{x_j}$ that have been merged and incorporated
in clause gadgets.

\begin{observation}\label{obs:mon}
For every $x\in \mathcal{X}$, $\gamma(H_x)$ is an $x$-monotone polygonal path in the plane.
This ensures, in particular, that if $c_i\in \mathcal{C}$ contains variables $x$, $y$, and $z$,
then the pipes of $H_x$, $H_y$, and $H_z$ that enter $u_{5i}$ and exit $u_{5i+3}$
appear in reverse ccw order in the rotation of $u_{5i}$ and $u_{5i+3}$, respectively.
\end{observation}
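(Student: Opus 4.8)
The plan is to read off both claims directly from the explicit coordinates defining $\gamma$, with essentially no auxiliary machinery.

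For the monotonicity claim I would first observe that for every index $k$ with $3\le k\le 5m+5$ the image $\gamma(u^x_k)$ has first coordinate exactly $k$: if $u^x_k$ is one of the clusters merged into a clause gadget then $k=5i+\ell$ with $\ell\in\{0,1,2,3\}$ and by construction $\gamma(u^x_k)=(k,0)$; otherwise $x=x_j$ and $\gamma(u^x_k)=(k,j+1)$. In either case the first coordinate equals the index. Next I would note that $H_x=(u^x_3,u^x_4,\ldots,u^x_{5m+5})$ is still a simple path after all identifications, since clusters are identified only when they share the same index and belong to the paths of distinct variables, so the clusters of a single $H_x$ remain pairwise distinct. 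Since $\gamma$ draws every pipe as a straight segment and the vertices of the polygonal path $\gamma(H_x)$ thus have strictly increasing $x$-coordinates $3,4,\ldots,5m+5$, the path $\gamma(H_x)$ is $x$-monotone.

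For the rotation consequence, fix a clause $c_i$ on three distinct variables $x,y,z$. First I would check that none of $u^v_{5i-1},u^v_{5i+4}$, for $v\in\{x,y,z\}$, is merged into a clause gadget: merged clusters have index $\equiv 0,1,2,3\pmod 5$, whereas $5i-1\equiv 5i+4\equiv 4\pmod 5$. Hence $\gamma(u^v_{5i-1})=(5i-1,h_v)$ and $\gamma(u^v_{5i+4})=(5i+4,h_v)$ with $h_v=j_v+1>0$ for $v=x_{j_v}$, and the three numbers $h_x,h_y,h_z$ are pairwise distinct. Now $\gamma(u_{5i})=(5i,0)$ and $\gamma(u_{5i+3})=(5i+3,0)$. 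The pipes of $H_x,H_y,H_z$ entering $u_{5i}$ are the segments joining the points $(5i-1,h_v)$ to $(5i,0)$; they leave $u_{5i}$ into the open second quadrant, so in the ccw rotation at $u_{5i}$ they occur in the cyclic order of $h_v$ from largest to smallest. Symmetrically, the pipes exiting $u_{5i+3}$ join $(5i+3,0)$ to the points $(5i+4,h_v)$; they leave $u_{5i+3}$ into the open first quadrant, so in the ccw rotation at $u_{5i+3}$ they occur in the cyclic order of $h_v$ from smallest to largest. Matching the pipes belonging to the same $H_v$ at the two ends, the two cyclic orders are reverses of each other, which is the claim.

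I do not expect a genuine obstacle here; the statement is a direct coordinate check. The two points that need a little care are: (i) confirming that the identifications never collapse a cluster of $H_x$ onto another cluster of $H_x$, which is what makes $\gamma(H_x)$ a path and gives the strict monotonicity; and (ii) keeping track of the sign of the horizontal displacement — the neighbors $u^v_{5i-1}$ lie to the \emph{left} of $u_{5i}$ while the neighbors $u^v_{5i+4}$ lie to the \emph{right} of $u_{5i+3}$ — since this is exactly what converts the common vertical ordering of the heights $h_v$ into opposite angular orders at the two clusters. If a clause is permitted to involve fewer than three distinct variables, the statement concerns only the paths actually present and the same argument applies verbatim.
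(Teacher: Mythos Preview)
Your proposal is correct and is precisely the intended justification: the paper states this as an observation without proof, since both claims follow directly from the explicit integer coordinates assigned to the clusters of $H_x$. Your verification via the first coordinate of $\gamma(u^x_k)$ being $k$, the $\pmod 5$ check that $u^v_{5i-1}$ and $u^v_{5i+4}$ are never merged, and the quadrant analysis at $u_{5i}$ and $u_{5i+3}$ is exactly the coordinate reading the paper leaves implicit.
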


\noindent\textbf{Construction of $G$ and $\lambda:G\rightarrow H$.}
For each clause $c_i\in \mathcal{C}$, create a path $G_i$ of 4 vertices mapped to
$(v_i,u_{5i+1},u_{5i+2},w_i)$.
For each variable $x\in \mathcal{X}$, create a path $G_x$ as follows.
First create a path of $15m+5$ vertices as a concatenation of three paths:
$P^x_1$, $P^x_2$, and $P^x_3$, which are mapped to
$(u^x_3,\ldots , u^x_{5m+4})$,
$(u^x_{5m+4},\ldots, u^x_4)$, and
$(u^x_4,\ldots, u^x_{5m+5})$, respectively.
We shall modify $P^x_1$ and $P^x_3$ within each cluster.
Regardless of these local modifications,
in every embedding of $G_x$, the path $P^x_2$
lies between $P^x_1$ and $P^x_3$. The truth value of
variable $x$ is encoded by the above-below relationship
between $P^x_1$ and $P^x_3$ (Fig.~\ref{fig:variable}(a-b)).

Each pair $(x,c_i)\in \mathcal{X}\times \mathcal{C}$, where
a literal $x$ or $\overline{x}$ appears in $c_i$,
corresponds to the subpath $(u_{5i},\ldots ,u_{5i+3})$ of $H_x$. Suppose that a subpath $A\subset P_1^x$ and $B\subset P_3^x$
are mapped to this subpath. To simplify notation, we assume that $A$ and $B$
are \emph{directed} from $u_{5i}$ to $u_{5i+3}$.

Refer to Fig.~\ref{fig:clause}.
If $c_0$ contains the non-negated $x$,
then replace $A$ on $P_x^1$ with a subpath mapped to
$A'=(u_{5i},u_{5i+1},u_{5i+2},u_{5i+3},u_{5i+2},u_{5i+1},u_{5i+2},u_{5i+3})$ and $B$
with a subpath mapped to
$B'=(u_{5i},u_{5i+1},u_{5i+2},u_{5i+1},u_{5i},u_{5i+1},u_{5i+2},u_{5i+3})$.
If $c_0$ contains the negated $\overline{x}$
then replace $A$ with $B'$, and $B$ with $A'$.
This completes the definition of $G$.

The drawing $\gamma:H\rightarrow \mathbb{R}^2$ and $\lambda:G\rightarrow H$ determine
$\crn_2(\gamma,\lambda)$. Let $K=\crn_2(\gamma,\lambda)+13m$.
Note that $G$ and $H$ have $O(mn)$ vertices and edges,
and the drawing $\gamma$ maps the clusters in $V(H)$ to
integer points in an $O(m)\times O(n)$ grid.

\begin{figure}[htbp]
	\centering
	\includegraphics[width=0.8\textwidth]{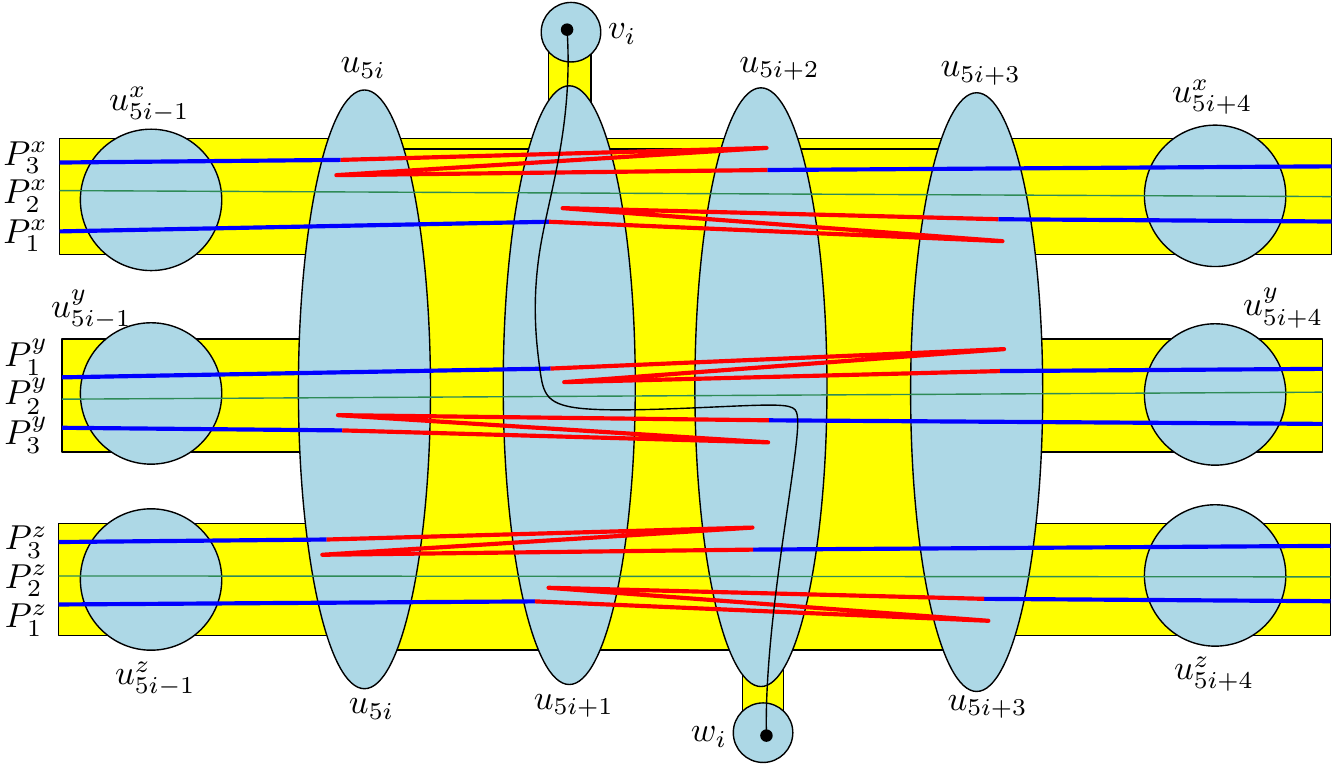}
	\caption{\textsf{A clause gadget for $c_i=(x\vee y\vee z)$, where
    $\tau(x)=\tau(z)=\text{false}$ and $\tau(y)=\text{true}$. The neighborhood of the four middle ``vertically prolonged'' clusters and pipes between them forms $\mathcal{N}_i$.}
	\label{fig:clause}}
\end{figure}

\noindent\textbf{Equivalence.} First, we show that the  satisfiability of $\Phi$ implies that $\crn(\gamma,\lambda)\le K$.
Assume that $\Phi$ is satisfiable, and let $\tau:\mathcal{X}\rightarrow \{\text{true},\text{false}\}$ be a satisfying truth assignment.
Fix $\varepsilon\in (0,\varepsilon_0)$. For every $x\in \mathcal{X}$, denote by $\mathcal{N}_x$ the union of disks $N_u$ and $N_{uv}$ for all clusters $v\in V(H_x)$ and pipes $uv\in E(H_x)$; and similarly let $\mathcal{N}_i$ be the union of such regions for the path $(u_{5i},\ldots , u_{5i+3})$ in $H$. For every $x\in \mathcal{X}$, incrementally, embed the path $G_x$ in $\mathcal{N}_x$ as follows: each edge is an $x$-monotone Jordan arc; if $\tau(x)=\text{true}$, then $P^x_1$ lies above $P^x_3$; otherwise $P^x_3$ lies above $P^x_1$. If a clause $c_i$ contains variables $x,y,z\in \mathcal{X}$, we also ensure that the embeddings of $G_z$, $G_y$, and $G_y$ are pairwise disjoint within $\mathcal{N}_i$. This is possible by Observation~\ref{obs:mon}. Finally, for $i=1,\ldots , m$, embed the path $G_i$ as follows. Assume that $c_i$ contains the variables $x,y,z\in \mathcal{X}$, where $x$ corresponds to a true literal in $c_i$. Then $\Gamma(G_i)$ starts from $\gamma(v_i)$ along the vertical line $x=5i+1$ until it crosses the arc $\Gamma(P^x_2)$, then follows $\Gamma(P^x_2)$ to the vertical line $x=5i+2$, and continues to $\gamma(w_i)$ along that line. Note that $\Gamma(P^x_2)$ crosses only 3 edges in $\Gamma(G_x)$, and 5 edges in $\Gamma(G_y)$ and $\Gamma(G_z)$. So there are 13 crossings in $\mathcal{N}_i$ for $i=1,\ldots , m$; and the total number of crossings is $\crn_2(\gamma,\lambda)+13m$, as required. \\

Second, we show  that $\crn(\gamma,\lambda)\le K$ implies that  $\Phi$ is satisfiable by constructing a satisfying assignment. Consider functions $\Lambda:G\rightarrow \mathcal{H}$ and $\Gamma:\mathcal{H}\rightarrow \mathbb{R}^2$ such that $\Gamma\circ\Lambda:G\rightarrow \mathbb{R}^2$ is a drawing in which $\crn(\Gamma\circ \Lambda)\leq K$. Note that $\crn_2(\gamma,\lambda)$ crossings are unavoidable due to edge-crossings in the drawing $\gamma(H)$. Hence, by the definition of $K$, there are at most $13m$ crossings in the neighborhoods of clusters. We show that (1) there must be precisely 13 crossings in each neighborhood $\mathcal{N}_i$, (2) $\Gamma\circ\Lambda(G_x)$ is an embedding for every $x\in \mathcal{X}$, and (3) the embeddings of $G_x$, for all $x\in \mathcal{X}$, jointly encode a satisfying truth assignment for $\Phi$. (1) and (2) is established by the following lemma.

\begin{lemma}\label{lem:13}
Let $i\in \{1,\ldots , m\}$ and let $x,y,z\in \mathcal{X}$ be the three variables in $c_i$. In $\Gamma\circ\Lambda$, there are at least 13 crossings in neighborhood $\mathcal{N}_i$,
and equality is possible only if none of the drawings $\Gamma\circ\Lambda(G_x)$, $x\in \mathcal{X}$, has self-crossings in $\mathcal{N}_i$, and  at least one of $G_x, G_y$ and $G_z$ is crossed exactly 3 times by $G_i$.
\end{lemma}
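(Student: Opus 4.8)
The plan is to analyze the clause gadget $\mathcal{N}_i$ one clause at a time, reducing everything to the combinatorial picture inside the four ``vertically prolonged'' clusters $u_{5i},\dots,u_{5i+3}$ and the three pipes between them. First I would isolate what happens inside $\mathcal{N}_i$: the arcs of $\Gamma\circ\Lambda$ that enter this region are the three subpaths $A'$ or $B'$ of $G_x,G_y,G_z$ (and their partners), together with the four edges of $G_i$. By Observation~\ref{obs:mon} the three variable paths enter $u_{5i}$ and leave $u_{5i+3}$ in a fixed cyclic order, so up to relabeling each of $G_x,G_y,G_z$ contributes a ``doubled arc'' that, seen from outside, behaves like a single strand that must traverse the four clusters from left to right. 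The key structural fact to extract is that each of the modified subpaths $A'$ and $B'$ is a walk that backtracks over the pipes (it is obtained precisely by inserting a spur-like detour), so any drawing of it in $\mathcal{N}_i$ that is \emph{self-crossing-free} forces a specific winding, and if we insist the combined drawing of $G_x$ has no self-crossing in $\mathcal{N}_i$, then the relative above/below position of $P^x_1$ and $P^x_3$ on the two sides of $\mathcal{N}_i$ is determined — this is exactly the mechanism encoding the truth value, and it is essentially the content of Lemma~\ref{lem:cycle} / the folklore winding lemma applied locally.

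Next I would set up the crossing count. Let $a,b,c\ge 0$ be the numbers of times $G_i$ crosses $G_x$, $G_y$, $G_z$ respectively inside $\mathcal{N}_i$, and let $s$ be the total number of self-crossings of $\Gamma\circ\Lambda(G_x)\cup\Gamma\circ\Lambda(G_y)\cup\Gamma\circ\Lambda(G_z)$ inside $\mathcal{N}_i$ (including crossings between distinct variable paths — but those are already counted in $\crn_2(\gamma,\lambda)$, so I must be careful to only count the genuinely ``extra'' crossings not forced by pipe crossings). Since $\gamma(v_i)$ is above the $x$-axis and $\gamma(w_i)$ is below it, and each arc of $G_i$ lies in $\mathcal{N}_i$, the path $G_i$ must cross from the top strand region to the bottom strand region; the doubled structure of $A'$ and $B'$ means that a strand encoding one truth value blocks $G_i$ with $3$ crossings (it can be threaded ``through'' the doubled-back part) while a strand encoding the opposite truth value forces $5$ crossings. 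I would prove this by a parity/Jordan-curve argument: contract each pipe and track how many times $G_i$'s arc must cross each variable walk, using that the walk $A'$ (resp.\ $B'$) separates $\gamma(v_i)$ from $\gamma(w_i)$ in a way depending on the winding. The upshot is an inequality of the form $a+b+c+(\text{self-crossings})\ge 13$, with $13 = 3+5+5$, and equality forces $s=0$ and forces at least one of $a,b,c$ to equal $3$.

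The main obstacle I expect is making the ``$3$ versus $5$'' dichotomy rigorous without drowning in case analysis: one has to show that no clever routing of the backtracking arcs $A'$, $B'$ can beat $3$ crossings with $G_i$ for the ``good'' literal, and that a ``bad'' literal truly costs $5$ and cannot be reduced by letting $G_x$ self-cross in a way that is cheaper overall — i.e.\ one must rule out trading one $G_i$-crossing for fewer than one self-crossing. The clean way to handle this is to work on the thickened surface $\mathcal{H}$ restricted to $\mathcal{N}_i$ (a planar region), fix the homotopy classes of the entering/leaving strand ends, and observe that the minimum number of crossings among curves in prescribed homotopy classes in a planar surface with boundary is realized and is additive in a way that lets us lower-bound each pairwise count by its geometric intersection number; then $3$ and $5$ come out as those intersection numbers for the two winding classes, and $s=0$ at equality because any self-crossing strictly increases the count. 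Once Lemma~\ref{lem:13} is in hand, claims (1) and (2) of the equivalence follow immediately by summing over $i$ (the total budget is exactly $13m$, so each $\mathcal{N}_i$ gets exactly $13$ and every $G_x$ is crossing-free), and (3) — reading off the satisfying assignment — follows because ``$G_i$ crosses some $G_x$ exactly $3$ times'' is precisely the statement that the literal of $x$ in $c_i$ is satisfied by the truth value encoded by the (now forced, by $s=0$) winding of $G_x$.
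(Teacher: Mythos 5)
Your overall target inequality (crossings of $G_i$ with the three variable gadgets, plus self-/mutual crossings of the gadgets, is at least $13$, with equality forcing no self-crossings and some gadget crossed exactly $3$ times) is the right statement, and your observation that $G_i$ must pass from $v_i$ above the axis to $w_i$ below it, forcing at least $3$ crossings with each gadget, matches the paper. But there is a genuine gap in how you justify the bound, and it sits exactly where the hard work is.

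You frame the ``$3$ versus $5$'' dichotomy as a property of each gadget in isolation, determined by its winding/truth value: a ``good'' literal costs $3$, a ``bad'' one costs $5$. Taken at face value this cannot give $13$: if all three literals were ``good'' your per-gadget accounting would only yield $3+3+3=9$. The correct mechanism (and the one the paper uses) is not per-gadget but relational. Each gadget places $7$ edges in the middle pipe $u_{5i+1}u_{5i+2}$, and $G_i$ places one edge $e_i$ there; the drawing induces a linear (top-to-bottom) order on these $22$ edges. A gadget costs only $3$ when $e_i$ lies \emph{inside} the interval $I_x$ spanned by that gadget's $7$ edges, and costs at least $5$ when $e_i$ lies outside $I_x$ (because $e_i$ is then above or below all $7$ edges and must cross the five edges attached to the pipe-degree-$2$ components at $u_{5i+1}$ or $u_{5i+2}$). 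The crux is then a case analysis on how many of $I_x,I_y,I_z$ contain $e_i$: if two of them do, those two intervals overlap and the corresponding gadgets are forced to cross each other at least twice in $\mathcal{N}_i$ (giving $3+3+5+2\ge 13$); if all three do, the three pairwise overlaps force at least $6$ gadget-gadget crossings (giving $9+6\ge 13$). Your sketch never addresses the possibility that $e_i$ is threaded through two or three gadgets at once, and the obstacle you do flag (trading a $G_i$-crossing for a single self-crossing of one gadget) is not where the difficulty lies. Your homotopy-class/geometric-intersection-number framework could in principle encode the overlap penalty, but as written it does not, and without it the lower bound of $13$ does not follow. Separately, tying ``$3$ crossings'' to the truth value of the literal belongs to the next lemma (Lemma~\ref{lem:sat}), not to this one; Lemma~\ref{lem:13} is a pure counting statement independent of any assignment.
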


By Lemma~\ref{lem:13}, $\crn_1(\lambda)\leq 13m$ implies that $\Gamma\circ \Lambda$ defines an embedding of $G_x$, for all $x\in \mathcal{X}$, in each region $\mathcal{N}_i$, $i=1,\ldots, m$. Consequently, $\Gamma\circ \Lambda$ defines an embedding of $G_x$ in $\mathbb{R}^2$ for all $x\in \mathcal{X}$. In every embedding  $\Gamma\circ \Lambda(G_x)$, for $x\in \mathcal{X}$, either $P^x_1$ lies above $P^x_2$, or vice~versa. We can now
define a truth assignment $\tau:\mathcal{X}\rightarrow \{\text{true},\text{false}\}$ such that for every $x\in \mathcal{X}$, $\tau(x)=\text{true}$ if and only if $P^x_1$ lies above $P^x_2$ in $\Gamma\circ \Lambda(G_x)$.

\begin{lemma}\label{lem:sat}
Assume that $\Gamma\circ\Lambda(G_x)$ is an embedding for every $x\in \mathcal{X}$, which determines the truth assignment $\tau:\mathcal{X}\rightarrow \{\text{true},\text{false}\}$ described above. For every $i=1,\ldots , m$, if variable $x$ appears in clause $c_i$, and $G_i$ crosses $G_x$ at most 3 times in $\mathcal{N}_i$, then $x$ appears as a true literal in $c_i$.
\end{lemma}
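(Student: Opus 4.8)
The plan is to prove the contrapositive: assuming $x$ occurs in $c_i$ but is a \emph{false} literal of $c_i$ under $\tau$, I will show that $\Gamma\circ\Lambda(G_i)$ meets $\Gamma\circ\Lambda(G_x)$ at least four (in fact five) times inside $\mathcal{N}_i$. Recall that $\mathcal{N}_i$ is a topological disk assembled from the disks $D_{u_{5i}},\dots,D_{u_{5i+3}}$ and the three pipes between them, and that by property~(P3) every crossing lies in a disk. Since $\Gamma\circ\Lambda(G_x)$ is an embedding, $\Gamma\circ\Lambda(G_x)\cap\mathcal{N}_i$ is the union of three pairwise disjoint arcs $\alpha_1,\alpha_2,\alpha_3$ — the traces of $P^x_1,P^x_2,P^x_3$ — each joining the pipe $R_{u^x_{5i-1}u_{5i}}$ to the pipe $R_{u_{5i+3}u^x_{5i+4}}$, with $\alpha_2$ (which is $x$-monotone, as $P_2^x$ is never modified) lying between $\alpha_1$ and $\alpha_3$; thus they are ``parallel'' chords of $\mathcal{N}_i$ and carry a consistent top-to-bottom order, inherited in every pipe. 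Likewise $\Gamma\circ\Lambda(G_i)\cap\mathcal{N}_i$ is a single arc $\beta$ visiting $D_{u_{5i+1}}$, the pipe $R_{u_{5i+1}u_{5i+2}}$ and $D_{u_{5i+2}}$ once each, with endpoints $p\in A_{u_{5i+1},v_i}$ and $q\in A_{u_{5i+2},w_i}$.

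First I would pin down the coarse behaviour of $\beta$. Closing $\beta$ up by an arc through $\mathbb{R}^2\setminus\mathcal{N}_i$ disjoint from $G_x$, one sees that the endpoints $p,q$ of $\beta$ separate the entry side of $\alpha_1,\alpha_2,\alpha_3$ from their exit side on $\partial\mathcal{N}_i$, so each $\alpha_j$ meets $\beta$ an odd number of times; under the hypothesis ``$G_i$ meets $G_x$ at most three times in $\mathcal{N}_i$'' this forces \emph{exactly one} crossing with each $\alpha_j$ and nothing more. Next, since $\gamma(v_i)$ lies above and $\gamma(w_i)$ below the $x$-axis while $\alpha_2$ runs $x$-monotonically through $\mathcal{N}_i$, the point $p$ lies in the component of $\mathcal{N}_i\setminus\alpha_2$ above $\alpha_2$ and $q$ in the one below (locally: $p$ lies on the boundary arc of $D_{u_{5i+1}}$ cut off above the chord $\alpha_2\cap D_{u_{5i+1}}$, and symmetrically for $q$). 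Combined with ``one crossing each'', $\beta$ must begin above all three of $\alpha_1,\alpha_2,\alpha_3$ and end below all three, crossing them in their vertical order $\alpha_{\mathrm{top}},\alpha_2,\alpha_{\mathrm{bot}}$.

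The core is then a two-case analysis. Unwinding the definition of $\tau$ and the $A'/B'$ substitution rule, ``$x$ is a false literal of $c_i$'' is precisely the statement that $\alpha_{\mathrm{bot}}$ is the trace of the replacement word $A'$ and $\alpha_{\mathrm{top}}$ that of $B'$. Inside $D_{u_{5i+2}}$ the word $A'$ contributes three pairwise disjoint through-strands joining the $u_{5i+1}$-side of $\partial D_{u_{5i+2}}$ to the $u_{5i+3}$-side (its only hairpin lies in $D_{u_{5i+1}}$), and inside $D_{u_{5i+1}}$ the word $B'$ contributes three pairwise disjoint through-strands joining the $u_{5i}$-side of $\partial D_{u_{5i+1}}$ to the $u_{5i+2}$-side. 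By (P3), $\beta$ meets $\alpha_{\mathrm{bot}}$ either in $D_{u_{5i+2}}$ or in $D_{u_{5i+1}}$. In the first case $\beta$ enters $D_{u_{5i+2}}$ still above $\alpha_{\mathrm{bot}}$, hence above all three of its through-strands, and leaves at $q$ below all three, so it meets $\alpha_{\mathrm{bot}}$ at least three times there, contradicting ``once''. In the second case $\beta$ must cross $\alpha_{\mathrm{top}}$ before it crosses $\alpha_{\mathrm{bot}}$ (the region above $\alpha_{\mathrm{top}}$ and below $\alpha_{\mathrm{bot}}$ is empty), so the $\alpha_{\mathrm{top}}$-crossing is also in $D_{u_{5i+1}}$; after meeting $\alpha_{\mathrm{bot}}$, $\beta$ leaves $D_{u_{5i+1}}$ below $\alpha_{\mathrm{bot}}$, hence (by the inherited order in $R_{u_{5i+1}u_{5i+2}}$) below all three strands of $\alpha_{\mathrm{top}}$, while it enters $D_{u_{5i+1}}$ at $p$ above all three of them, so it meets $\alpha_{\mathrm{top}}$ at least three times in $D_{u_{5i+1}}$. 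Either way $\beta$ meets $G_x$ at least $3+1+1=5$ times in $\mathcal{N}_i$, a contradiction; hence $x$ is a true literal. I expect the bookkeeping of this last step to be the only real obstacle: one must match ``above/below the arc $\alpha$ in $\mathcal{N}_i$'' with ``above/below the individual chords inside a disk'' (immediate, since the trace of $\alpha$ in a disk is exactly the union of those chords and the top/bottom boundary portions of the disk lie in the corresponding components), and verify the stated shapes of the traces of $A'$ and $B'$ in the two disks. All of this uses only Observation~\ref{obs:mon}, not the coordinates of $\gamma$.
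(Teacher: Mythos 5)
Your overall architecture (parity forces one crossing per path; the asymmetric placement of the hairpins of $A'$ at $u_{5i+1}$ versus $B'$ at $u_{5i+2}$ then forces extra crossings when the $A'$-word sits at the bottom) is the right idea and matches the mechanism behind the paper's proof. However, Case~1 of your core analysis contains an invalid step: from ``$\beta$ enters $D_{u_{5i+2}}$ still above $\alpha_{\mathrm{bot}}$'' you conclude ``hence above all three of its through-strands.'' Here ``above $\alpha_{\mathrm{bot}}$'' can only mean ``in the component of $\mathcal{N}_i\setminus\alpha_{\mathrm{bot}}$ containing $p$,'' and since $\alpha_{\mathrm{bot}}$ traverses the pipe $R_{u_{5i+1}u_{5i+2}}$ three times, its trace cuts the boundary arc $A_{u_{5i+2},u_{5i+1}}$ at three points, so that arc alternates between the two components: the entry point $s_2$ of $\beta$ can lie in the ``above'' component while sitting \emph{between the second and third chords}, i.e.\ below two of the three through-strands. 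This is not a vacuous worry: take the $A'$-arc drawn with its three pipe edges ordered (top to bottom) third visit, second visit, first visit --- a legal embedding, since the $u_{5i+1}$-hairpin joins the top two and the $u_{5i+3}$-hairpin joins the bottom two --- and put $e_i$ between the bottom two. Then $\beta$ crosses $\alpha_{\mathrm{bot}}$ exactly once, and that crossing is in $D_{u_{5i+2}}$, so Case~1's hypothesis holds but its conclusion (three crossings with $\alpha_{\mathrm{bot}}$ in $D_{u_{5i+2}}$) is false. The parenthetical at the end of your writeup, declaring the matching of ``above the arc'' with ``above the individual chords'' to be immediate, is precisely where this breaks.

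The configuration above is of course still excluded, but only by bringing in $\alpha_{\mathrm{top}}$: if $e_i$ lies below two of $\alpha_{\mathrm{bot}}$'s three pipe edges, then by the nesting $\alpha_{\mathrm{top}}\succ\alpha_2\succ\alpha_{\mathrm{bot}}$ in the pipe it lies below all three pipe edges of $\alpha_{\mathrm{top}}=B'$, whose three through-strands in $D_{u_{5i+1}}$ each separate $A_{u_{5i+1},v_i}$ from $s_1$, giving three crossings with $\alpha_{\mathrm{top}}$ there. In fact this observation lets you dispense with the case split on where the $\alpha_{\mathrm{bot}}$-crossing occurs altogether: crossing the $A'$-arc only once forces $e_i$ to lie below both pipe edges incident to its hairpin at $u_{5i+1}$ (otherwise the hairpin in $D_{u_{5i+1}}$ or the corresponding strands in $D_{u_{5i+2}}$ contribute two more crossings), hence below all of $B'$, hence three crossings with $B'$ in $D_{u_{5i+1}}$, for a total of at least $3+1+1=5$. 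This is essentially the paper's argument, phrased there as: exactly one crossing per path forces the top path $P_h$ to be the one with a single pipe-degree-2 component in $\lambda^{-1}[u_{5i+1}]$, i.e.\ the $A'$-path. So the gap is local and repairable, but as written Case~1 asserts a false intermediate claim.
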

\begin{proof}
Consider the highest and lowest path $P_h$ and $P_l$ among $P^x_1,P^x_2$ or $P^x_3$, respectively, in
$\mathcal{N}_i\cap \Gamma\circ\Lambda(G_x)$, none of which can be $P^x_2$ since $\Gamma\circ\Lambda(G_x)$ is an embedding.
By the construction of $\lambda$, either there exists exactly one pipe-degree 2 component of $P_h$  in $\lambda^{-1}[u_{5i+1}]$ and exactly one pipe-degree 2 component of $P_l$ in $\lambda^{-1}[u_{5i+2}]$, or vice~versa.

 By the construction of $\lambda$, $G_i$ crosses each of $P^x_1$, $P^x_2$, and $P^x_3$ at least once in $\mathcal{N}_i$. By the hypothesis of the lemma,  it crosses each exactly once. Then $P_h$ has only one pipe-degree 2 component in $\lambda^{-1}[u_{5i+1}]$, and  $P_l$ has only one pipe-degree 2 component in $\lambda^{-1}[u_{5i+2}]$.
By the construction of $\lambda$, if $x$ appears as a non-negated literal in $c_i$ this means that $P_h=P_1^x$ lies above $P_2^x$ and therefore $\tau(x)=\text{true}$.
Similarly, if $x$ appears as a negated literal in $c_i$ this means that $P_3^x=P_h$ lies above $P_2^x$ and therefore $\tau(x)=\text{false}$.
Consequently, $x$ appears as a true literal in $c_i$ and that concludes the proof.
\hfill$\Box$\end{proof}

Since $\crn_1(\lambda)\leq 13m$,  for every $i=1,\ldots ,m$,
there are exactly 13 crossings in $\mathcal{N}_i$  by Lemma~\ref{lem:13}. Moreover, by Lemma~\ref{lem:13} the drawing $\Gamma\circ\Lambda(G_x)$ is an embedding for every $x\in \mathcal{X}$,
and in every $c_i$ for one its variables $x$ the drawing of $G_x$
is crossed by $G_i$ exactly 3 times.
By Lemma~\ref{lem:sat},  the assignment $\tau$ makes at least one literal in each clause $c_i$ of $\Phi$ true.
We conclude that $\Phi$ is satisfiable, as required.
This completes the proof of NP-hardness.

\medskip\noindent\textbf{Second Construction: Cycle.}
In our first construction, $G$ is a disjoint union of paths, and for every path endpoint $a\in V(G)$,
$a$ is the only vertex mapped to the cluster $\lambda(a)\in V(H)$. This property allows us to expand the construction as follows.
We augment $G$ into a cycle $\overline{G}$ by adding a perfect matching $M_G$ connecting the path endpoints, and we augment $H$ with the
corresponding matching between the clusters $M_H=\{\lambda(a)\lambda(b):ab\in M_G\}$, and for every new pipe $uv\in M_H$
draw a polygonal arc $\gamma(uv)$ between $\gamma(u)$ and $\gamma(v)$ that does not pass through the image of any other cluster
(but may cross images of other pipes). The augmentation does not change $\crn_1(\lambda)$, and we can easily
compute the increase in $\crn_2(\gamma,\lambda)$ due to new crossings. Consequently, finding $\crn(\gamma\circ \lambda)$ remains NP-hard.

\section{Conclusions}
\label{sec:con}

Motivated by recent efficient algorithms that can decide whether a piecewise linear map $\varphi:G\rightarrow \mathbb{R}^2$ can be perturbed into an embedding, we investigate the problem of computing the minimum number of crossings in a perturbation. We have described an efficient algorithm when $G$ is a cycle and $\phi$ has no spurs (Theorem~\ref{thm:cycle}); and the problem becomes NP-hard if $G$ is an arbitrary graph, or if $G$ is a cycle but $\varphi$ may have spurs (Theorem~\ref{thm:hardness}). However, perhaps one can minimize the number of crossings efficiently under milder assumptions. We formulate one promising scenario as follows: Is there a polynomial-time algorithm that finds $\crn(\gamma\circ\lambda)$ when $\lambda^{-1}[u]$ is a planar graph (resp., an edgeless graph) for every cluster $u\in V(H)$ and $\lambda$ has no spurs?

\appendix
\section{Omitted Proofs}

\rephrase{Lemma}{\ref{lem:cluster-exp}}{If $G$ is a cycle, $\lambda:G\rightarrow H$ has no spur, and $u\in V(H)$,
then \textsf{ClusterExpansion}$(u)$ produces an instance where $G'$ is a cycle, $\lambda':G'\rightarrow H'$ has no spur,
and $\crn(\gamma\circ\lambda) = \crn(\gamma'\circ\lambda')$.}

\begin{proof}
If $G$ is a cycle, then every vertex $b\in \lambda^{-1}[u]$ has precisely two neighbors, say $a$ and $c$. Step~3 subdivides these edges with new vertices $x_a$ and $x_c$, Step~4 inserts an edge $x_ax_c$, and Step~6 deletes $b$. Consequently, the path $(a,b,c)$ is replaced by a path $(a,x_a,x_c,c)$. Since all such paths are edge-disjoint, the resulting graph $G'$ is a cycle.

Since $\lambda:G\rightarrow H$ has no spur, for every vertex $b\in \lambda^{-1}[u]$, the neighbors $a$ and $c$ are in distinct clusters, that is $\lambda(a)\neq \lambda(c)$. Consequently, $y_{\lambda(a)}\neq y_{\lambda(c)}$ and so $\lambda'(x_a)\neq \lambda'(x_c)$. Therefore the operation does not create spurs.

Let $\Lambda:G\rightarrow \mathcal{H}$ be a drawing that attains $\crn_1(\lambda)$. We may assume that every connected component of $\Lambda(G)\cap D_u$ and $\Lambda(G)\cap R_{uv}$ is a straight-line segment.

    Let $(a,b,c)$ and $(d,e,f)$ be two different paths in $G$ such that $\lambda(b)=\lambda(e)=u$.
There are two types of crossings of $\Lambda$ in $D_u$ between paths $(a,b,c)$ and $(d,e,f)$ as above.  In the  first type,  $\lambda(a)$ and $\lambda(c)$ interleave in the rotation at $u$ with $\lambda(d)$ and $\lambda(f)$.
 In the  second type, we have $\lambda(a)=\lambda(d)$, $\lambda(a)=\lambda(f)$, $\lambda(c)=\lambda(d)$, or $\lambda(c)=\lambda(f)$.

 Let $\text{CR}_{\Lambda}^{\times}(u)$ denote the number of crossings of the first type.    Let $\text{CR}_{\Lambda}^{<}(u)$ denote the number of crossings of the second type.
 In the following we construct $\Lambda':G'\rightarrow \mathcal{H}'$  witnessing $\crn(\gamma'\circ\lambda')\le\crn(\gamma\circ\lambda)$  such that
$\left(\sum_{u\in V(H')} \text{CR}_{\Lambda'}(u)\right)=\crn_1(\lambda)-\text{CR}_{\Lambda}^{\times}(u)$ and  $   \crn_2(\gamma',\lambda') = \crn_2(\lambda)+\text{CR}_{\Lambda}^{\times}(u)$.
Note that the second condition does not depend on $\Lambda'$ and follows by the construction of $\gamma'$.

Let $h$ denote the natural homeomorphism between $\mathcal{H}\setminus \text{int}(D_u)$ and the connected component of $\mathcal{H}'\setminus \bigcup_{uv\in E(G)}\text{int}(D_{y_v})$ containing $D_v$'s for $v\not=u$. Thus, $h$ takes $D_v$'s of $\mathcal{H}$ to $D_v$'s of $\mathcal{H}'$, and similarly $R_{vw}$'s of $\mathcal{H}$  to $R_{vw}$'s of $\mathcal{H}'$.
We put $\Lambda'(vw)=h(\Lambda(vw))$, if $\lambda(v),\lambda(w)\not=u$.
We define $\Lambda'$ on every path $(a,x_a,x_c,c)$, that replaced in $G'$  path $(a,b,c)$ in $G$ such that $\lambda(b)=u$, as follows.  Let $p_{ab}= \partial(\mathcal{H}\setminus \text{int}(D_u)) \cap \Lambda(ab)$
and $p_{bc}= \partial(\mathcal{H}\setminus \text{int}(D_u)) \cap \Lambda(bc)$. We define $\Lambda'(a,x_a)$ as the concatenation of the polygonal line from $h(\Lambda(a))$ to $h(p_{ab})$ contained in $\Lambda(ab)$ and a very short crossing free line segment contained in $D_{\lambda'{(x_a)}}$. In the same manner we construct $\Lambda'(x_c,c)$.
Let $(a',x_{a'},x_{c'},c')$ denote another such path, i.e., $(a',x_{a'},x_{c'},c')$ replaced $(a',b',c')$ in $G$ such that $\lambda(b')=u$.

We construct  $\Lambda'(x_a,x_c)$ as a polygonal line with at most two bends at  $\partial D_{\lambda(x_a)}$ and  $\partial D_{\lambda(x_c)}$ so that $\Lambda'(x_a,x_c)$
and $\Lambda'(x_{a'},x_{c'})$ cross if and only if $p_{ab}$ and $p_{bc}$ interleave with $p_{a'b'}$ and $p_{b'c'}$ along  $\partial D_u$, and $\{\lambda'(x_a),\lambda'(x_c)\}\cap \{\lambda'(x_{a'}),\lambda'(x_{c'})\}\not=\emptyset$. In the case when
 $\Lambda'(x_a,x_c)$
and $\Lambda'(x_{a'},x_{c'})$ cross, we also require that they cross exactly once.
Let $p_{ac}^a$ and $p_{ac}^c$ denote the intersection of $\Lambda'(x_{a},x_{c})$ with $\partial D_{\lambda(x_a)}$ and  $\partial D_{\lambda(x_c)}$,
respectively. It is enough to specify $\Lambda'$ by presenting constraints on the order of the intersection points of the edges $x_ax_c$ with $\partial D_{\lambda(x_a)}$ and $\partial D_{\lambda(x_c)}$ enforcing the previously mentioned property, and realize these constraints by the corresponding cyclic orders of these points.

Let us fix a total order $<$ on the vertices of $V(H')\setminus V(H)$.
If either $\lambda'(x_a)= \lambda'(x_{a'})$ and $\lambda'(x_c)= \lambda'(x_{c'})$ and  $\lambda'(x_a)< \lambda'(x_c)$; or $\lambda'(x_a)= \lambda'(x_{a'})$ and  $\lambda'(x_c)\not= \lambda'(x_{c'})$, we construct $\Lambda'(x_{a},x_{c})$ so that  $h(p_{ab})$ and $p_{ac}^a$ interleave with $h(p_{a'b'})$ and $p_{a'c'}^{a'}$ along $\partial D_{\lambda(x_a)}$ if and only
if  $p_{ab}$ and $p_{bc}$ interleave with $p_{a'b'}$ and $p_{b'c'}$  along  $\partial D_u$. We make the points
$p_{ac}^c$ and $p_{bc}$  not to interleave with $p_{a'c'}^{c'}$ and $p_{b'c'}$.
Clearly, the given constraints yield the desired properties.

These constraints are realized by an inductive construction using  the
total order $<$ on the vertices of $H'$.
First, let $u'$ be the first vertex in this order.
The order of $h(p_{ab})$'s and $p_{ac}^a$'s along the boundary of $D_{u'}$ is obtained from the order of $p_{ab}$'s and $p_{bc}$'s along $D_u$ via the bijection $h(p_{ab})\leftrightarrow p_{ab}$ and $p_{ac}^a\leftrightarrow p_{bc}$.
Throughout the induction we maintain the invariant that for every $u'\in V(H')\setminus V(H)$ the cyclic order of considered $h(p_{ab})$'s along  $\partial D_{u'}$ is the same as the cyclic order of the corresponding $p_{ab}$'s along  $\partial D_{u}$ via the above bijection, which clearly holds after the base step.
In the inductive step we consider $u'\in V(H')\setminus V(H)$ and we need to specify the orders for the $p_{ac}^c$'s such that $\lambda(x_c)>\lambda(x_a)=u'$. This we do analogously as in the base step, and due to
the invariant the inductive step goes through.
This concludes the proof for
 $\crn(\gamma'\circ\lambda')\le\crn(\gamma\circ\lambda)$.

 To establish the other direction, we  start with a drawing $\Lambda':G'\rightarrow \mathcal{H}'$ witnessing $\crn(\gamma'\circ\lambda')$ apply the inverse of $h$ to construct $\Lambda$ in $\mathcal{H}\setminus D_u$. Finally, it is enough to observe that the order of intersection points $p_{ab}$ along $\partial D_u$ specifies $\lambda$ for which
$\left(\sum_{u\in V(H)} \text{CR}_{\Lambda}(u)\right)\le\crn_1(\lambda')-\text{CR}_{\Lambda}^{\times}(u)$ and  $   \crn_2(\gamma',\lambda') = \crn_2(\lambda)+\text{CR}_{\Lambda}^{\times}(u)$, and that concludes the proof.\hfill$\Box$
\end{proof}

\rephrase{Lemma}{\ref{lem:pipe-exp}}{If $G$ is a cycle, $\lambda:G\rightarrow H$ has no spur, and $uv\in E(H)$ is a safe pipe,
then \textsf{PipeExpansion}$(uv)$ produces an instance where $G'$ is a cycle, $\lambda':G'\rightarrow H'$ has no spur,
and $\crn(\gamma\circ\lambda) = \crn(\gamma'\circ\lambda')$.
}

\begin{proof}
The proof is the almost the same as  the proof of Lemma~\ref{lem:cluster-exp}
with $D_u \cup R_{uv} \cup D_v$ playing the role of $D_u$.
Instead of paths  $(a,b,c)$ such that $\lambda(b)=\lambda(e)=u$
we consider paths $(a,b,c,d)$ such that $\lambda(b)=u$ and $\lambda(c)=v$.

If $G$ is a cycle, then the both vertices of every $bc\in \lambda^{-1}[uv]$ have precisely one other neighbor, say $a$ for $b$ and $d$ for $c$. Step~3 subdivides these edges with new vertices $x_a$ and $x_d$, Step~4 inserts an edge $x_ax_d$, and Step~6 deletes $b$ and $c$. Consequently, the path $(a,b,c,d)$ is replaced by a path $(a,x_a,x_d,d)$. Since all such paths are edge-disjoint, the resulting graph $G'$ is a cycle. Clearly, the operation of \textsf{PipeExpansion} does not create spurs, since $\lambda$ has no spur.

Let $\Lambda:G\rightarrow \mathcal{H}$ be a drawing that attains $\crn_1(\lambda)$. We may assume that every connected component of $\Lambda(G)\cap D_u$, $\Lambda(G)\cap D_v$ and $\Lambda(G)\cap R_{uv}$ is a straight-line segment.

  There are two types of crossings of $\Lambda(G) \cap (D_u \cup D_v \cup R_{uv})$.  Let $(a,b,c,d)$ and $(e,f,g,h)$ be paths in $G$ such that $\lambda(b)=\lambda(f)=u$ and $\lambda(c)=\lambda(g)=v$.
 In the  first type,  $\lambda(a)$ and $\lambda(d)$ interleave in the rotation at $uv$ with $\lambda(e)$ and $\lambda(h)$.
 In the  second type, we have $\lambda(a)=\lambda(e)$, $\lambda(a)=\lambda(h)$, $\lambda(d)=\lambda(e)$, or $\lambda(d)=\lambda(h)$.

 Let $\text{CR}_{\Lambda}^{\times}(uv)$ be the number of crossings of the first type.    Let $\text{CR}_{\Lambda}^{<}(uv)$ denote the number of crossings of the second type.
 Analogously as in the proof of Lemma~\ref{lem:cluster-exp}
 with $D_u\cup R_{uv} \cup D_v$ playing the role of $D_u$, we construct $\Lambda':G'\rightarrow \mathcal{H}'$  witnessing $\crn(\gamma'\circ\lambda')\le\crn(\gamma\circ\lambda)$  such that
$\left(\sum_{u\in V(H')} \text{CR}_{\Lambda'}(u)\right)=\crn_1(\lambda)-\text{CR}_{\Lambda}^{\times}(uv)$ and  $   \crn_2(\gamma',\lambda') = \crn_2(\lambda)+\text{CR}_{\Lambda}^{\times}(uv)$.
Note that the second condition does not depend on $\Lambda'$ and follows by the construction of $\gamma'$.

 To establish the other direction, we can start with a drawing $\Lambda':G'\rightarrow \mathcal{H}'$ witnessing $\crn(\gamma'\circ\lambda')$ apply the inverse of analog of $h$ from the proof of Lemma~\ref{lem:cluster-exp} to construct $\Lambda$ in $\mathcal{H}\setminus (D_u\cup R_{uv} \cup D_v)$. Finally, it is enough to observe that the order of intersection points $p_{ab}$ along $\partial( D_u \cup R_{uv} \cup D_v)$ yields $\lambda$ for which
$\left(\sum_{u\in V(H)} \text{CR}_{\Lambda}(u)\right)\le\crn_1(\lambda')-\text{CR}_{\Lambda}^{\times}(uv)$ and  $   \crn_2(\gamma',\lambda') = \crn_2(\lambda)+\text{CR}_{\Lambda}^{\times}(uv)$. Here, we can again treat $(D_u\cup R_{uv} \cup D_v)$ as $D_u$ in the proof of Lemma~\ref{lem:cluster-exp}, and that concludes the proof.
\hfill$\Box$\end{proof}

\rephrase{Lemma}{\ref{lem:time}}{
With the above data structures, Algorithm~1 runs in $O((M+R)\log M)$ time,
where $M=|E(H)|+|E(G)|$ and $R=\crn(\gamma\circ\lambda)<M^2$.}
\begin{proof}
At preprocessing, we can compute $\lambda^{-1}[u]$, $\lambda^{-1}[uv]$, and $w(uv)$ by a simple traversal of $G$ in $O(|E(G)|)$ time. Since every crossing in the drawing $\gamma(H)$ corresponds to at least one crossing in any perturbation, $\gamma(H)$ has at most $R$ crossings. Hence the complexity of the arrangement of all edges in $\gamma(H)$ is $O(M+R)$.
A standard line sweep algorithm can find all crossings of $\gamma(H)$ in $O((M+R)\log (M+R))=O((M+R)\log M)$
time. The same algorithm can also compute $W(uv)$ for all $uv\in E(H)$, and $\crn_2(\gamma,\lambda)$.

Algorithm~1 starts with a for-loop over all $u\in U_0$. We can update $\lambda^{-1}[u]$, $\lambda^{-1}[uv]$, and $w(uv)$ in
$O(\deg_H(u)+|\lambda^{-1}(u)|)$ time per \textsf{ClusterExpansion}$(u)$. This sums to $O(|E(H)|+|E(G)|)$ time for all $u\in U_0$.
All new crossings in $\gamma(H)$ occur between the pipes created in the interior of the disks $D_u$, for all $u\in U_0$.
These crossings can be found in $O((M+R)\log M)$ total time.

Note also that \textsf{ClusterExpansion}$(u)$, for all $u\in U_0$ doubles the number of edges in $G$. However, $|E(G)|$ is invariant under  \textsf{PipeExpansion} operations. In fact, \textsf{PipeExpansion}$(uv)$ partitions the set $\lambda^{-1}[uv]\subset E(G)$ into two or more subsets, which are mapped to pipes in the ellipse $D_{uv}$, and the $\lambda^{-1}(e)$ for every other pipe $e\in E(H)$ remains unchanged.
We maintain $\lambda^{-1}[u]$, $\lambda^{-1}[v]$, $\lambda^{-1}[uv]$, and $w(uv)$ in the while loop of Algorithm~1 using a heavy-path decomposition.
Suppose \textsf{PipeExpansion}$(uv)$ replaces $uv$ with pipes $u_1v_1,\ldots , u_kv_k$, which correspond to pairs of clusters in the neighborhood of
$u$ and $v$, respectively. The naive implementation would take $O(w(uv))$ time, but we can reduce it to $O(w(uv)-\max_{i} w(u_iv_i))$:
Put $S=\lambda^{-1}[uv]$ and compute the sets $\lambda^{-1}[u_iv_i]$ incrementally in parallel by deleting edges from $S$; when all but maximal  set has been computed, then all remaining elements of $S$ can be added to this maximal set in $O(1)$ time. The time $O(w(uv)-\max_{i} w(u_iv_i))$
can then be charged to the edges that move from $\lambda^{-1}[uv]$ to a set $\lambda^{-1}[u_iv_i]$ with $w(u_iv_i)\leq w(uv)/2$. Over all operations of the while loop of Algorithm~1, edges that are initially mapped to a pipe of weight $w$ receive a charge of at most $O(\sum_{i=0}^\infty 2^i\lfloor w/2^i\rfloor)=O(w\log w)$. Summation over all edges of $E(H)$ yields $O(\sum_{uv\in E(H)} w(uv)\log w(uv)) \leq O(|E(G)|\log |E(G)|)=O(M\log M)$.

Also, \textsf{PipeExpansion}$(uv)$ replaces $uv$ with pipes $u_1v_1,\ldots , u_kv_k$, then every pipe that crossed $uv$ will cross $u_1v_1,\ldots , u_kv_k$. So $W(u_iv_i)$, $i=1,\ldots ,k$, can be computed by adding the number of \emph{new} crossings to $W(uv)$. All new crossings created by \textsf{PipeExpansion}$(uv)$ are between new pipes in the ellipse $D_{uv}$. Since pipe crossings are never removed, the total number of such pipe crossings is at most $R$, and they can be computed in $O((M+R)\log M)$ time over all operations of the while loop of Algorithm~1.

At the end of the algorithm, both $\crn_1(\lambda)=w(uv)-1$ for an arbitrary pipe $uv\in E(H)$,
and $\crn_2(\gamma,\lambda)=\frac12 \sum_{uv\in E(H)} w(uv)W(uv)$ can be calculated in $O(M)$ time.
\hfill$\Box$\end{proof}

\rephrase{Lemma}{\ref{lem:13}}{
Let $i\in \{1,\ldots , m\}$ and let $x,y,z\in \mathcal{X}$ be the three variables in $c_i$. In $\Gamma\circ\Lambda$, there are at least 13 crossings in neighborhood $\mathcal{N}_i$,
and equality is possible only if none of the drawings $\Gamma\circ\Lambda(G_x)$, $x\in \mathcal{X}$, has self-crossings in $\mathcal{N}_i$, and  at least one of $G_x, G_y$ and $G_z$ is crossed exactly 3 times by $G_i$.}
\begin{proof}
Let $i\in \{1,\ldots, m\}$, and assume that $c_i$ contains the variables $x,y,z\in \mathcal{X}$ such that the ccw neighbors of $u_{5i}$ in $\gamma(H)$ are $(v_i,u^x_{5i-1}, u^y_{5i-1}, u^z_{5i-1}, u_{5i+1})$. Each of the graphs $G_x$, $G_y$, and $G_y$ have 3 vertex disjoint connected subgraphs in $\lambda^{-1}[H_i]$.
Due to the rotation of cluster $u_{5i+1}$ and $u_{5i+2}$, the path $G_i$ has to cross each of them, which yields at least 3 crossings in $\mathcal{N}_i$ with each graph $G_x, G_y$ and $G_z$. Furthermore, $G_x$, $G_y$, and $G_y$ each have 5 vertex disjoint connected subgraphs (each of which is formed by a single vertex) in $\lambda^{-1}[u_{5i+1}]$ (resp., $\lambda^{-1}[u_{5i+2}]$) with pipe-degree 2, and one with pipe-degree 1. For each $G_x$, $G_y$, and $G_y$ there exist altogether  exactly 7 edges incident to these subgraphs (vertices) in $\lambda^{-1}[u_{5i+1}u_{5i+2}]$. Note that $G_i$ has only one edge in $\lambda^{-1}[u_{5i+1}u_{5i+2}]$, which we denote by $e_i$.

 Without loss of generality we assume that  all the edge crossings of $G_i$ with $G_x,G_y$ and $G_z$ in the drawing $\Gamma\circ \Lambda$ occur along $e_i$, and outside of  $N_{u_{5i+1}u_{5i+2}}$.
By the latter, the drawing $\Gamma\circ \Lambda$ defines a total ``top to bottom'' order of the $7\cdot 3+1=22$ edges in $\lambda^{-1}[u_{5i+1}u_{5i+2}]$.
  Let $I_x$, $I_y$, and $I_z$ be the  minimum  intervals in this order spanned by the edges of $\lambda^{-1}[u_{5i+1}u_{5i+2}]$ in $G_x$, $G_y$, and $G_z$, respectively. If the edge $e_i$ is above (resp., below) all the 7 edges of $G_x$ in $\lambda^{-1}[u_{5i+1}u_{5i+2}]$, then
it creates at least 5 crossings with the edges incident to the pipe-degree 2 components in $N_{u_{5i+1}}$ (resp., $N_{u_{5i+2}}$). Analogous statements hold for $G_y$ and $G_z$, as well.
That is, if $e_i$ is not in $I_x$ (resp., $I_y$ and $I_z$), then $G_i$ crosses $G_x$ (resp., $G_y$ and $G_z$) at least 5 times in $\mathcal{N}_i$.

We distinguish several cases based on the relative positions of the intervals $I_x$, $I_y$, and $I_z$. If $I_x$, $I_y$, and $I_z$ are pairwise disjoint, then $e_i$ lies in at most one of these intervals, and $G_i$ crosses $G_x$, $G_y$, and $G_z$ altogether at least $3+5+5=13$ times.
If $e_i$ lies in exactly two of these intervals, say $I_x$ and $I_y$, then there are at least 2  crossings between $G_x$ and $G_y$ in $\mathcal{N}_i$,
and $G_i$ crosses $G_x$, $G_y$, and $G_z$ at least $3+3+5=11$ times. Finally, if $e_i$ lies in all three intervals, then there must be at least 6 crossings crossings between $G_x$, $G_y$, and $G_z$ in $\mathcal{N}_i$, and $G_i$ crosses $G_x$, $G_y$, and $G_z$ altogether at least $3+3+3=9$ times. In all cases, the number of crossings among $G_i$, $G_x$, $G_y$, and $G_z$ in $\mathcal{N}_i$ is at least 13, as required.
Equality is possibly only if none of $G_x$, $G_y$, and $G_z$ has self-crossings, and at least one of $G_x$, $G_y$, and $G_z$ is crossed by $G_i$ exactly 3 times.
\hfill$\Box$\end{proof}

\end{document}